\def\EndIf{\End\li\kw{end if} }
\def\EndWhile{\End\li\kw{end while} }
\newcommand\arrayA{\ensuremath{A}\xspace}
\theoremstyle{definition}
\newtheorem{corollary}{Corollary}
\newtheorem{remark}{Remark}
\theoremstyle{plain}
\newtheorem{theorem}{Theorem}
\newtheorem{proposition}{Proposition}
\newtheorem{lemma}{Lemma}
\renewcommand\proof[1][]{%
    \ifthenelse{\equal{#1}{}}{%
        \textit{Proof: }%
    }{%
        \textit{Proof #1:}%
    }%
}
\newenvironment{theopargself}{}{}
\newcommand\weakemph[1]{\textsl{#1}}
\newdimen\makeboxdimen
\newcommand\makeboxlike[3][l]{%
\setbox0=\hbox{#2}%
\global\makeboxdimen=\wd0%
\setbox1=\hbox{\makebox[\makeboxdimen][#1]{%
\makebox[0pt][#1]{#3}%
}}%
\ht1=\ht0%
\dp1=\dp0%
\box1%
}
\newcommand\indicator{{\bf 1}}
\def\borel{\mathbb B}
\newcommand\uniform{\mathrm{Uniform}}
\newcommand\Hyper{{{\rm Hypergeo}}}
\newcommand\multinomial{\mathrm{Mult}}
\newcommand\range{\, .. \, }
\newcommand\one[1]{{\bf 1}_{#1}}
\newcommand\convD{%
	{\buildrel {\raisebox{-1ex}{$\scriptstyle\cal D$}} 
	\over \longrightarrow}%
}
\newcommand\convZ[1]{%
	\mathrel{\overset
	{\raisebox{-2ex}{$\scriptscriptstyle\zeta_{#1}$}}
	\longrightarrow }%
}
\newcommand\convL[1]{%
	\mathrel{\overset
	{\raisebox{-2ex}{$\scriptstyle L_{#1}$}}
	\longrightarrow }%
}
\newcommand\eqlaw{
	\mathchoice{
		\mathrel{\overset{\raisebox{0ex}{$\scriptstyle \cal D$}}=}%
	}{
		\mathrel{\overset{\raisebox{-1ex}{$\scriptscriptstyle \cal D$}}=}%
	}{
		\mathrel{\overset{\cal D}=}%
	}{
		\mathrel{\overset{\cal D}=}%
	}%
}
\let\eqdist\eqlaw
\newcommand\given{\, \vert \, }
\newcommand\E{{\bf E}}
\newcommand\V{{\bf Var}}
\newcommand\almostsure{\buildrel a.s. \over  \longrightarrow}
\newcommand\Prob{{\bf Prob}}
\let\ce\colonequals
\def\.{\mskip1mu}
\newcommand\N{\mathbb N}
\newcommand\R{\mathbb R}
\newcommand\rel[1]{\mathrel{\:{#1}\:}}
\newcommand\wrel[1]{\mathrel{\;{#1}\;}}
\newcommand\wwrel[1]{\mathrel{\;\;{#1}\;\;}}
\newcommand\bin[1]{\mathbin{\:{#1}\:}}
\newcommand\wbin[1]{\mathbin{\;{#1}\;}}
\newcommand\wwbin[1]{\mathbin{\;\;{#1}\;\;}}
\newcommand\pe{\phantom{{}={}}}
\newcommand\ppe{\phantom=}
\newcommand\pprime{\prime\prime}
\newcommand\Csp{C^{*\.\prime}}
\newcommand\Cspp{C^{*\.\pprime}}
\newcommand\Oh{\mathcal O}
\newcommand\like[3][c]{\makeboxlike[#1]{\ensuremath{#2}}{\ensuremath{#3}}}
\newcommand{\ui}[2]{%
	\mathchoice{
		#1^{(#2)}%
	}{
		#1^{%
			\smash{\raisebox{.75pt}{\mbox{\tiny$($}}}%
			#2%
			\smash{\raisebox{.75pt}{\mbox{\tiny$)$}}}%
		}%
	}{
		#1^{\smash{(}\rule[-.75pt]{0pt}{2.75pt}#2\smash{)}}%
	}{
		#1^{(#2)}%
	}%
}
\newcommand{\vect}[1]{\boldsymbol{\mathbf{#1}}}
\def\email#1{\texttt{#1}}
\title{Analysis of Quickselect under\\ Yaroslavskiy's Dual-Pivoting Algorithm}
\author{Sebastian Wild${}^*$
\and Markus E. Nebel\thanks{%
	Computer Science Department,
	University of Kaiserslautern,
	Germany
	\email{\{wild,nebel\}\,@cs.uni-kl.de}
}
\and Hosam Mahmoud\thanks{%
	Hosam Mahmoud,
	Department of Statistics,
	The George Washington University,
	Washington, D.C. 20052, U.S.A.
}}
\begin{document}  
\maketitle

\begin{abstract}
\setlength\parindent{1.5em}
There is excitement within the algorithms community about a new
partitioning method introduced by Yaroslavskiy. 
This algorithm renders Quicksort slightly faster than the case when it runs under 
classic partitioning methods. 
We show that this improved performance in Quicksort is \emph{not} sustained in
Quickselect; a variant of Quicksort for finding order statistics.

We investigate the number of comparisons
made by Quickselect to find a key with a randomly selected
rank under Yaroslavskiy's algorithm. 
This \weakemph{grand averaging} is a smoothing operator
over all individual distributions for
specific fixed order statistics. 
We give the exact grand average. 
The grand distribution of the number of comparison (when suitably scaled) 
is given as the fixed-point  
solution of a distributional equation of a
contraction in the Zolotarev metric space.   
Our investigation shows that Quickselect under 
older partitioning methods
slightly outperforms Quickselect under Yaroslavskiy's algorithm, 
for an order statistic of a random rank. 
Similar results are obtained for extremal order statistics, where again we find
the exact average, and the distribution for the number of comparisons
(when suitably scaled). 
Both limiting distributions are of perpetuities
(a sum of products of independent mixed continuous 
random variables). 
\bigskip

\noindent\textbf{AMS subject classifications:} Primary:
   60C05;    
 secondary:
   68P10,    
   68P20.    
\smallskip{}

\noindent\textbf{Keywords:} 
Quicksort, Quickselect, combinatorial probability, 
algorithm, recurrence, average-case analysis,
grand average, perpetuity, fixed point, metric, contraction.
\end{abstract}

\section{Quicksort, What Is New?}

Quicksort is a classic fast sorting algorithm.
It was originally published by Hoare~\cite{Hoare1962}.
Quicksort is the method of choice to implement a sorting function in many
widely used program libraries, 
e.\,g.\ in the C/C++ standard library and in the Java runtime environment. 
The algorithm is recursive, and at each level of recursion
it uses a partitioning algorithm.
Classic implementations of Quicksort use a variety of partitioning methods
derived from fundamental versions invented by Hoare~\cite{Hoare1961,Hoare1962}
and refined and popularized by 
Sedgewick; see for example~\cite{Sedgewick1977}.

Very recently, a partitioning algorithm proposed by Yaroslavskiy created
some sensation in the algorithms community. The excitement arises
from various indications, theoretical and experimental, that
Quicksort on average runs faster under Yaroslavskiy's dual pivoting 
(see~\cite{Wild2012}).
Indeed, after extensive experimentation Oracle adopted Yaroslavskiy's
dual-pivot Quicksort as the default sorting method for their Java~7 
runtime environment, a software platform used on many computers worldwide.

\section{Quicksort and Quickselect}
Quicksort is a two-sided algorithm for sorting data (also called \emph{keys}). 
In a classic implementation, it puts a pivot key in its correct
position, and arranges the data in two groups relative to that pivot.
Keys smaller than the pivot are put in one group, the rest are placed in the other group.
The two groups are then sorted recursively.

The one-sided version (Quickselect) of the algorithm can be used to find
order statistics. 
It is also known as Hoare's ``Find'' algorithm, which was first given
in~\cite{Hoare1961}.
To find a certain order statistic, such as the first quartile,
Quickselect goes through the partitioning stage, just as in Quicksort,  
then the algorithm decides whether the pivot is the sought
order statistic or not. If it is, the algorithm terminates 
(announcing the pivot to be the sought element); 
if not, it recursively pursues only the group on the side (left or right of
the pivot) where the order statistic resides.
We know which side to choose, as the rank of the pivot becomes known 
after partitioning.

There are algorithms for specific order statistics
like smallest, second smallest, largest, median, etc. 
However, adapting one of them to work for a different order 
statistic is not an easy task. 
Take for example the algorithm for the second largest, 
and suppose we want to tinker with it to find the median. 
It cannot be done without entirely rewriting the algorithm. 
On the other hand, Quickselect is one algorithm that is versatile 
enough to deal with \emph{any} rank without changing any statements in it,
a feature that may be appealing in practice, particularly
for a library function that cannot predict which 
order statistic will be sought. 

A standard measure for the analysis of a comparison-based sorting
algorithm is the number
of \emph{data} comparisons it makes while sorting; 
see for example~\cite{Kirschenhofer1998,Knuth1998}. 
Other types of comparison take place while sorting, such as index
or pointer comparisons. However, they are negligible in view
of the fact that they mostly occur at lower asymptotic orders,
and any individual one of them typically
costs considerably less than
an individual data comparison. For instance,
comparing two indices is a comparison of two short integers,
while two keys can be rather long such as business records, 
polynomials,  or DNA strands, typically each
comprising thousands of nucleotides. 
Hence, these additional index and pointer comparisons are often
ignored in the analysis. We shall follow this tradition. 

Unless further information on the input data is available (e.\,g., in some
specialized application), one strives for an analysis that remains as generic as
possible while still providing sensible predictive quality for use
cases in practice.
Therefore, we consider the \emph{expected} costs when operating on an input
chosen uniformly at random, which for selection means that all $n!$ orderings of $n$
keys are equally likely.
(This assumption can be enforced for any input if pivots are
chosen randomly.)

We furthermore assume that the sought order statistic is also chosen uniformly
at random among all $n$ possible ranks.
This models the scenario of a generic library function
intended to cope with any possible selection task.
We note that parts of the analysis remain feasible when the sought rank $r$ is 
kept fixed (becoming a second parameter of the
analysis)~\cite{Kirschenhofer1998,Knuth1998}.
However, for \emph{comparing} different versions of Quickselect, a
one-dimensional measure is much easier to interpret and in light of our clear
(and negative) results, the parametric analysis is
unlikely to provide additional algorithmic insights.%
\footnote{
	To shed some light on that, we computed the expected comparison counts for
	$n=200$ and $n=300$ with fixed $r$ for all ranks $1\le r\le n$.	
	We find that Yaroslavskiy's algorithm needs more comparisons in \emph{all}
	these cases. We see no reason to believe that this will change for
	larger~$n$.%
}

We do, however, think that a \emph{distributional} analysis is important.
Unlike for Quicksort, whose costs become more and more concentrated
around their expectation as $n$ increases, the standard deviation of
(classic) Quickselect is of the same order as the expectation, (both are
linear)~\cite{Kirschenhofer1998}.
This means that even for $n\to\infty$, substantial deviations from the
mean are to be expected.
As the use of more pivots tends to give more balanced subproblem sizes, 
it seems plausible that these deviations can be reduced by switching to the
dual-pivot scheme described below. 
Therefore, we also compute the variance and a limit distribution for the
number of comparisons.

It is worth
mentioning that a fair contrast between 
comparison-based sorting algorithms and sorting algorithms based on
other techniques (such as radix selection, which uses comparisons of bits)
should resort to the use of one basis, such as how many bits are compared
in both. Indeed, in comparing two very long bit strings, we can decide almost
immediately that the two strings are different, if they differ in the 
first bit. In other instances, where the two strings are ``similar,'' 
we may run a very long sequence of bit comparisons till we discover
the difference. Attention  to this type of contrast is taken up in 
\cite{Fill2013} and other sources.

Other associated cost measures include
the number of swaps or data moves \cite{Mahmoud2010,Martinez2009}. 
We do not discuss those in detail in this paper, but we note that
swaps can be analyzed in a very similar way (reusing the
number of swaps in one partitioning step from our previous work on Quicksort
\cite{Wild2013Quicksort}).
The resulting expected values are reported in Table~\ref{tab:results}.

\section{Dual Pivoting}

The idea of using two pivots (\weakemph{dual-pivoting}) had been suggested
before, see Sedgewick's and Hennequin's Ph.\,D.\
dissertations~\cite{hennequin1991analyse,Sedgewick1975}. 
Nonetheless, the implementations
considered at the time did not show any promise.
Analysis reveals that Sedgewick's dual-pivot Quicksort variant
performs an asymptotic
average of $\frac {32} {15} n\ln n + \Oh(n)$
data comparisons, while the classic (single-pivot) version uses only
an asymptotic average of $2 n \ln n + \Oh(n)$ comparisons
\cite{Sedgewick1975,Wild2012}. 
Hennequin's variant performs $2 n \ln n + \Oh(n)$ comparisons
\cite{hennequin1991analyse}\,---\,asymptotically the same as classic
Quicksort. However, the inherently more complicated dual-pivot partitioning
process is presumed to render it less efficient in practice.

These discoveries were perhaps a reason to discourage
research on sorting with multiple-pivot partitioning, till Yaroslavskiy
carefully heeded implementation details. 
His dual-partitioning algorithm improvement broke a psychological barrier.
Would such improvements be sustained in Quickselect? 
It is our aim in this paper to answer this question. 
We find out that there is no improvement in the number of
comparisons: 
Quickselect under Yaroslavskiy's dual-pivot partitioning
algorithm (simply Yaroslavskiy's algorithm, henceforth)
is slightly \emph{worse} than classic single-pivot Quickselect.

\smallskip
Suppose we intend to sort $n$ distinct keys stored in the array $A[1\range n]$.  
Dual partitioning uses \emph{two} pivots, as opposed to the single pivot used in
classic Quicksort. Let us assume the two pivots are initially
$A[1]$ and $A[n]$, and suppose their ranks are $p$ and $q$. If $p > q$,
we swap the pivots. 
While seeking two positions for the two pivots, 
the rest of the data is categorized in three groups: small, medium and large. 
Small keys are those with ranks less than~$p$, medium keys have ranks
at least $p$ and less than $q$, and large keys are those with ranks at least $q$.
Small keys
are moved to positions lower than $p$, large keys are
moved to positions higher than $q$, medium keys are
kept in positions between $p+1$ and $q-1$.
So, the two keys with ranks $p$ and $q$ can be moved to their correct
and final positions.

After this partitioning stage, dual-pivot Quicksort 
then invokes itself recursively (thrice) on $A[1\range p-1]$,  $A[p+1\range q-1]$ 
and $A[q+1\range n]$.
The boundary conditions are the very small arrays of size 0 (no keys
to sort), arrays of size 1 (such an array is already sorted), and
arrays of size 2 (these arrays need only one comparison between the two keys
in them); in these cases no further recursion is invoked.

\begin{algorithm}
	\vspace*{-1ex}
	\def\pind{\id{i_p}}
	\def\qind{\id{i_q}}
	\def\signum{\mathrm{sgn}}
	\def\case#1{$\kw{in case}\like[r]{MM}{#1}\;\kw{do}\;$}
	\begin{codebox}
		\Procname{$\proc{Quickselect}\,(\arrayA,\id{left},\id{right},r)$}
		\zi \Comment Assumes $\id{left} \le r \le \id{right}$. 
		\zi \Comment Returns the element that would reside in $\arrayA[r]$ after
		sorting $\arrayA[\id{left} \range \id{right}]$.
		\li \If $\id{right} \le \id{left}$
		\li \Then 
				\Return $\arrayA[\id{left}]$
		\li	\Else
		\li		$(\pind, \qind) \gets 
					\proc{PartitionYaroslavskiy}(\arrayA,\id{left}, \id{right})$
		\li		$c \gets \signum(r - \pind) \bin+ \signum(r - \qind)$
				\qquad\Comment Here $\signum$ denotes the signum function.
		\li		\kw{case distinction} on the value of $c$
		\li		\Do
		 			\case{-2}	\Return $\proc{Quickselect}\,(\arrayA,
									\makeboxlike[c]{$\pind+1$}{$\id{left}$},
									\makeboxlike[c]{$\pind+1$}{$\pind-1$},
									r
								)$
								\label{lin:yaroslavskiy-call-1}
		\li			\case{-1} 	\Return $\arrayA[\pind]$
		\li			\case{0}	\Return $\proc{Quickselect}\,(\arrayA,
									\makeboxlike[c]{$\pind+1$}{$\pind+1$},
									\makeboxlike[c]{$\pind+1$}{$\qind-1$},
									r
								)$
								\label{lin:yaroslavskiy-call-2}
		\li			\case{+1}	\Return $\arrayA[\qind]$
		\li			\case{+2} 	\Return $\proc{Quickselect}\,(\arrayA,
									\makeboxlike[c]{$\pind+1$}{$\qind+1$},
									\makeboxlike[c]{$\pind+1$}{$\id{right}$},
									r
								)$
								\label{lin:yaroslavskiy-call-3}
				\End
		\li		\kw{end cases}
			\EndIf
		\zi
	\end{codebox}
	\vspace{-4ex}	
	\caption{\strut%
		Dual-pivot Quickselect algorithm for finding the $r$th order statistic.
	}
	\label{Alg:osqs}
\end{algorithm}

\begin{algorithm}
\vspace{-1ex}
	\def\pind{\id{i_p}}
	\def\qind{\id{i_q}}
	\begin{codebox}
		\Procname{$\proc{PartitionYaroslavskiy}\,(\arrayA,\id{left},\id{right})$}
		\zi \Comment Assumes $\id{left} \le \id{right}$.
		\zi \Comment Rearranges \arrayA such that with $(\pind,\qind)$ the return
				value holds	\smash{ $\begin{cases}
	 				\forall \: \id{left}\le j \le \pind	& \arrayA[j] \le p \\
	 				\forall \:\id{\pind}\le j \le \qind	& p \le \arrayA[j] \le q \\
	 				\forall \: \qind\le j \le \id{right}	& \arrayA[j] \ge q
				\end{cases}$}\;.
				\rule[-2ex]{0pt}{1ex}
		\li	\If $\arrayA[\id{left}] > \arrayA[\id{right}]$
					\label{lin:yaroslavskiy-comp-0} 
		\li	\Then
				$p\gets \arrayA[\id{right}]$; \>\>\>\> $q\gets \arrayA[\id{left}]$
		\li	\Else
		\li		$p\gets \arrayA[\id{left}]$;  \>\>\>\> $q\gets \arrayA[\id{right}]$
			\EndIf
		\li $\ell\gets \id{left} + 1$; 
		 	\quad $g\gets \id{right} - 1$; 
		 	\quad $k\gets \ell$ \label{lin:yaroslavskiy-init-l-g-k} 
		\li	\While $k\le g$ 
		\li	\Do
				\If $\arrayA[k] < p$ \label{lin:yaroslavskiy-comp-1}
		\li		\Then
					Swap $\arrayA[k]$ and $\arrayA[\ell]$ \label{lin:yaroslavskiy-swap-1}
		\li			$\ell\gets \ell+1$ \label{lin:yaroslavskiy-l++-1}
		\li		\Else 
		\li			\If $\arrayA[k] \ge q$ \label{lin:yaroslavskiy-comp-2}
		\li			\Then
						\While $\arrayA[g] > q$ and $k<g$ \kw{do} $g\gets g-1$ \kw{end while} \label{lin:yaroslavskiy-comp-3}
		\li				\If $\arrayA[g] \ge p$ \label{lin:yaroslavskiy-comp-4}
		\li				\Then
							Swap $\arrayA[k]$ and $\arrayA[g]$ \label{lin:yaroslavskiy-swap-2}
		\li				\Else
		\li					Swap $\arrayA[k]$ and $\arrayA[g]$; \;
							Swap $\arrayA[k]$ and $\arrayA[\ell]$ \label{lin:yaroslavskiy-swap-3}
		\li					$\ell\gets \ell+1$ \label{lin:yaroslavskiy-l++-2}
						\EndIf
		\li				$g\gets g-1$ \label{lin:yaroslavskiy-g--}
					\EndIf
				\EndIf
		\li		$k\gets k+1$ \label{lin:yaroslavskiy-k++}
			\EndWhile \label{lin:yaroslavskiy-end-while}
		\li	$\ell\gets \ell-1$; \>\>\>$g\gets g+1$
		\li	$\arrayA[\id{left}] \gets \arrayA[\ell]$; \>\>\>\>\> $\arrayA[\ell] \gets p$
				\label{lin:yaroslavskiy-swap-4} 
				\qquad\Comment Swap pivots to final positions 
		\li	$\arrayA[\id{right}] \gets \arrayA[g]$; \>\>\>\>\> $\arrayA[g] \gets q$
				\label{lin:yaroslavskiy-swap-5} 
		\li	\Return $(\ell, g)$
		\zi
	\end{codebox}
	\vspace{-4ex}
	\caption{\strut%
		Yaroslavskiy's dual-pivot partitioning algorithm.
	}
\label{Alg:Yaroslavskiy}
\end{algorithm}

This is the general paradigm for dual pivoting. 
However, it can be implemented in many different ways. 
Yaroslavskiy's algorithm keeps track of three pointers: 
\begin{itemize}
\item $\ell$,  moving up from lower to higher indices, and below which
all the keys have ranks less than $p$.
\item $g$, moving down from higher to lower indices, and above which
all the keys have ranks at least $q$.
\item $k$, moving up beyond $\ell$ and not past $g$. During the execution,
all the keys at or  below position $k$ and above $\ell$ are medium,
with ranks lying between $p$ and $q$ (both inclusively).
\end{itemize}  

Hence, the three pointers $\ell$, $k$ and $g$ divide the array into four
ranges, where we keep the relation of elements invariantly as given above.
Graphically, this reads as follows:
\begin{quote}
	\mbox{}\hfill%
	\begin{tikzpicture}[
		scale=0.5,
		baseline=(ref.south),
		every node/.style={font={}},
		semithick,
	]	
	
	\draw (-.75,0) -- ++(14.5,0) -- ++(0,1) -- ++(-14.5,0) -- cycle;
	\node at (-.375, .5) {$p$} ;
	\node at (13.375, .5) {$q$} ;
	\draw (0,0) -- ++(0,1);
	\draw (13,0) -- ++(0,1);
	
	\node at (1.5,0.5) {$< p$};
	\draw (3,1) -- ++ (0,-1);
	\node at (3.3,-0.4) {$\ell$};
	
	\node at (12,0.5) {$\ge q$};
	\draw (11,1) -- ++ (0,-1);
	\node at (10.7,-0.4) {$g$};
	
	\node at (5,0.5) {$p\le \circ\le q$};
	\draw (7,1) -- ++(0,-1);
	\node at (7.3,-0.4) {$k$};

 	\begin{pgfinterruptboundingbox}
	\node[below] at (10.7,-0.5) {$\leftarrow$};
	\node[below] at (3.3,-0.5) {$\rightarrow$};
	\node[below] at (7.3,-0.5) {$\rightarrow$};
 	\end{pgfinterruptboundingbox}
	
	\node[inner sep=0pt] (ref) at (9,0.5) {?};
	\end{tikzpicture}%
	\hfill\mbox{}
\end{quote}
\medskip

The adaptation of Quicksort to deliver a certain order statistic $r$ 
(the $r$th smallest key) is straightforward. Once the positions
for the two pivots are determined,
we know whether $r < p$, $r = p$, $p < r < q$, $r=q$,
or $r > q$. 
If $r = p$ or $r =q$, the algorithm declares
one of the two pivots (now residing at position $p$ or $q$) 
as the required $r$th order statistic, and terminates. 
If $r \ne p$ and $r \ne q$, 
the algorithm chooses one of three subarrays: 
If $r < p$ the algorithm recursively seeks the $r$th order statistic in
\mbox{$A[1\range p-1]$}, if $p < r < q$, the algorithm seeks the
$(r-p)$th order statistic among the keys of $A[p+1\range q-1]$;
this $(r-p)$th key is, of course, ranked $r$th in the entire data set.
If $r > q$, the algorithm seeks the $(r-q)$th order statistics in $A[q+1 \range n]$;
this $(r-q)$th element
is then ranked $r$th in the entire data set.
Thus, only the subarray containing the desired order statistic is searched and the 
others are ignored. 

Algorithm~\ref{Alg:osqs} is the formal algorithm in pseudo code. The code calls
Yaroslavskiy's dual-pivot partitioning procedure 
(given as Algorithm~\ref{Alg:Yaroslavskiy}). 
The code is written to work on the general subarray $A[\id{left} \range
\id{right}]$ in later stages of the recursion, and the initial call is 
$\proc{Quickselect}(A,1,n,r)$.

Note that in Algorithm~\ref{Alg:Yaroslavskiy}, variables $p$ and~$q$ are used
to denote the data elements used as pivots, whereas in the main text, $p$
and~$q$ always refer to the \emph{ranks} of these pivot elements relative to the
current subarray.
We kept the variables names in the algorithm to stay consistent with the
literature.

A few words are in order to address the case of \emph{equal elements}.
If the input array contains equal keys, several competing notions of
ranks exist.
We choose an \weakemph{ordinal ranking} that fits our situation best: 
The rank of an element is defined as its \emph{index} in the array after
sorting it with (a corresponding variant of) Quicksort.
With this definition of ranks,  
Algorithm~\ref{Alg:osqs} correctly handles arrays with equal
elements.

\section{Randomness Preservation}
\label{pro:randomness-preservation}

We recall from above that our probability model on data assumes the keys 
to be in random order (\textsl{random permutation model}). 
Since only the relative ranking is important, 
we assume w.\,l.\,o.\,g.\ (see \cite{Knuth1998}) that the $n$ keys are  
real numbers independently sampled from a common \emph{continuous}
probability distribution.

Several partitioning algorithms can be employed;
a good one produces subarrays again following the random
permutation model in subsequent recursive steps:
%
	{\slshape
	If the whole array is a (uniformly chosen) random permutation of its
	elements, so are the subarrays produced by partitioning.}

For instance, in classic single-pivot Quicksort,
if $p$ is the final position of the pivot, 
then right after the first partitioning stage the relative ranks of 
$A[1], \ldots, A[p-1]$ are a random permutation of $\{1,\ldots, p-1\}$ 
and the relative ranks of $A[p+1], \ldots, A[n]$ are a random permutation of 
$\{1,\ldots, n-p\}$, see \cite{hennequin1989combinatorial} or~\cite{Knuth1998}.

Randomness preservation enhances performance on random data,
and is instrumental in formulating recurrence equations for the analysis.
Hoare's \cite{Hoare1962} and Lomuto's \cite{Bentley1984}
single-pivot partitioning algorithms are 
known to enjoy this important and desirable property.

\begin{lemma}
Yaroslavskiy's algorithm (Algorithm~\ref{Alg:Yaroslavskiy}) 
is randomness preserving.
\end{lemma}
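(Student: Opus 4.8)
The plan is to show that conditioned on the ranks $p$ and $q$ of the two pivots in the current array, each of the three subarrays $A[1\range p-1]$, $A[p+1\range q-1]$, $A[q+1\range n]$ produced by \proc{PartitionYaroslavskiy} is, conditionally on its contents, a uniformly random permutation of those contents, and that the three are conditionally independent. Since the input is a uniformly random permutation of $n$ distinct reals, it suffices to track, as the partitioning loop runs, which \emph{relative orderings} of the non-pivot elements are consistent with a given trace of branch decisions, and to check that all orderings compatible with a fixed final placement (fixed set of small, medium, large elements in their respective blocks) occur equally often. Concretely I would introduce, for a fixed pair of pivot values with ranks $p<q$ (the swap in lines~\ref{lin:yaroslavskiy-comp-0}--4 only relabels, so w.l.o.g.), the set $S$ of small keys, $M$ of medium keys, $L$ of large keys, with $|S| = p-1$, $|M| = q-p-1$, $|L| = n-q$, and argue that the permutation $\pi$ of the internal slots $A[\id{left}+1\range\id{right}-1]$ that results from a run is a deterministic function only of the order type of $\pi$ \emph{restricted to certain comparisons the algorithm actually performs}; the key is that the branch taken at each step depends only on comparisons of the form $A[k]<p$, $A[k]\ge q$, $A[g]>q$, $A[g]\ge p$, i.e.\ only on the class ($S$, $M$, or $L$) of the element currently scanned, never on the relative order of two elements within the same class.

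The core combinatorial step is therefore: fix the \emph{class sequence} $\sigma\in\{S,M,L\}^{n-2}$ describing, for each internal input position, whether that position initially holds a small, medium, or large key. Run the algorithm symbolically on $\sigma$: every test in Algorithm~\ref{Alg:Yaroslavskiy} is determined by $\sigma$ and the current pointer values, so the entire sequence of swaps, and hence the final permutation of positions, is a function $\Phi$ of $\sigma$ alone. Within a class the relative order of the elements is carried along passively by the swaps. Hence if I fix which $p-1$ of the internal slots end up small, which $q-p-1$ medium, which $n-q$ large (this is exactly fixing $\sigma$, since after partitioning the block structure is rigid), then the map from ``input arrangement with class pattern $\sigma$'' to ``output arrangement'' is a bijection that respects the within-class orders. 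Under the random permutation model, conditioned on $\sigma$ all $|S|!\,|M|!\,|L|!$ within-class orderings are equally likely, so their images are equally likely, which says precisely that the left block is a uniform permutation of $S$, the middle block a uniform permutation of $M$, the right block a uniform permutation of $L$, and that these three orderings are independent of each other (they come from independent uniform shuffles of disjoint ground sets). Summing over the distribution of $\sigma$ (equivalently over $p,q$) and invoking the continuity of the key distribution to ignore ties gives the statement.

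I would organize the write-up as: (i) reduce to distinct reals drawn i.i.d.\ from a continuous distribution and fix the pivot pair, absorbing the initial conditional swap; (ii) define the class sequence $\sigma$ and observe, by inspecting each comparison in lines~\ref{lin:yaroslavskiy-comp-1}, \ref{lin:yaroslavskiy-comp-2}, \ref{lin:yaroslavskiy-comp-3}, \ref{lin:yaroslavskiy-comp-4}, that control flow depends on $\sigma$ only; (iii) conclude that the induced permutation of internal positions is $\Phi(\sigma)$, a bijection on arrangements that preserves within-class order; (iv) push forward the conditional uniform measure to get uniformity and independence of the three subarrays; (v) uncondition. The main obstacle — and the part needing genuine care rather than hand-waving — is step (ii)/(iii): Yaroslavskiy's partitioning is asymmetric and has the nested inner \kw{while} loop on $g$ (line~\ref{lin:yaroslavskiy-comp-3}) plus the two-swap case in line~\ref{lin:yaroslavskiy-swap-3} where $\ell$ also advances, so one must verify that \emph{none} of these branches ever compares two non-pivot elements to each other, and that the bookkeeping of which slot holds which element after the compound swaps still amounts to a within-class-order-preserving bijection; this is most cleanly done by exhibiting the explicit inverse (undo the recorded swap sequence, which is itself determined by $\sigma$) rather than by a counting argument.
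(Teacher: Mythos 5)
Your proposal is correct and rests on exactly the observation the paper itself uses: every key comparison in Algorithm~\ref{Alg:Yaroslavskiy} involves at least one of the two pivots, so control flow and the swap sequence are determined solely by the class (small/medium/large) of each element encountered, never by the relative order of two non-pivot elements. The difference is one of depth. The paper simply notes this property and cites Hennequin's result that it is a sufficient criterion for randomness preservation, and is done. You instead unpack \emph{why} the criterion works, via the class sequence $\sigma$, the deterministic map $\Phi(\sigma)$, and the within-class-order-preserving bijection; this is, in effect, a self-contained proof of Hennequin's criterion specialized to the present algorithm. Your argument is sound, and the invertibility-by-undoing-the-recorded-swap-sequence point is the right way to see that $\Phi(\sigma)$ is a bijection without a counting argument, precisely because the swap sequence is itself a function of $\sigma$. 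What the paper buys by citing is brevity; what you buy by reproving is independence from the reference and an explicit description of the conditional structure (uniform and mutually independent within-class orderings given $\sigma$), which the cited result delivers only implicitly.
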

\begin{proof}
Obviously, every key comparison in Yaroslavskiy's algorithm involves (at least)
one pivot element; see lines~\ref*{lin:yaroslavskiy-comp-0},
\ref*{lin:yaroslavskiy-comp-1}, \ref*{lin:yaroslavskiy-comp-2}, \ref*{lin:yaroslavskiy-comp-3} and
\ref*{lin:yaroslavskiy-comp-4} of Algorithm~\ref{Alg:Yaroslavskiy}. 
Hennequin shows that this is a sufficient
criterion for randomness preservation \cite{hennequin1989combinatorial}, so
Yaroslavskiy's algorithm indeed creates random subarrays.
\end{proof}

\section{Main Results}

We investigate the performance of Quickselect's number of data comparisons, when
it seeks a key of a randomly selected rank, 
while employing Yaroslavskiy's algorithm.
The exact grand average number of data comparisons 
is given in the following statement,
in which
$H_n$ is the $n$th harmonic number $\sum_{k=1}^n 1/k$.

\begin{proposition}
\label{Prop:grandave}
	Let $C_n$ be the number of data comparisons exercised while Quickselect is searching
	under Yaroslavskiy's algorithm
	for an order statistic chosen uniformly at random from all possible ranks. 
	For $n \ge 4$,
	\begin{align*}
			\E[C_n] 
		&\wwrel=   
			\tfrac {19} 6 n  - \tfrac {37} 5  H_n  +\tfrac{1183}{100} 
			- \tfrac {37}{5} H_n n^{-1} - \tfrac {71} {300} n^{-1}  
	     \wwrel\sim \tfrac {19} 6 n.
	\end{align*}
\end{proposition}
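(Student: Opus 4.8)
The plan is to set up and solve a full-history recurrence for $a_n \ce \E[C_n]$. By Lemma~1 (randomness preservation), after one partitioning step each of the three subarrays produced is again a uniformly random permutation of its elements. Condition on the two pivot ranks $(p,q)$ with $1\le p<q\le n$, which are uniformly distributed over all $\binom n2$ pairs, and on the sought rank $r$, uniform on $\{1,\dots,n\}$ and independent of the input arrangement. Given $(p,q)$, the events $r=p$ and $r=q$ (probability $\tfrac2n$ together) end the search right after partitioning; otherwise $\proc{Quickselect}$ recurses into the left, middle, or right subarray of size $p-1$, $q-p-1$, or $n-q$, with probabilities $\tfrac{p-1}{n}$, $\tfrac{q-p-1}{n}$, $\tfrac{n-q}{n}$, and in each case the conditional rank within the chosen subarray is again uniform. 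Writing $T_n$ for the expected number of comparisons made by one call of $\proc{PartitionYaroslavskiy}$ on $n$ keys, this gives
\[
  a_n \wwrel= T_n + \frac{1}{\binom n2}\sum_{1\le p<q\le n}
     \Bigl(\tfrac{p-1}{n}\,a_{p-1}+\tfrac{q-p-1}{n}\,a_{q-p-1}+\tfrac{n-q}{n}\,a_{n-q}\Bigr).
\]
Since the triples $(p-1,\,q-p-1,\,n-q)$ run exactly once over all weak compositions of $n-2$ into three ordered parts, the three inner sums coincide; re-indexing the first by $j=p-1$ collapses the recurrence to
\[
  a_n \wwrel= T_n + \frac{6}{n^2(n-1)}\sum_{j=0}^{n-2} j\,(n-1-j)\,a_j ,\qquad n\ge 3,
\]
with $a_0=a_1=0$ and $a_2=1$.

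Next I would insert the exact one-step partitioning cost $T_n$, which for $n$ above the trivial array sizes is affine in $n$ and is available from our earlier analysis of Yaroslavskiy's Quicksort~\cite{Wild2012,Wild2013Quicksort}; already its leading term $T_n\sim\tfrac{19}{12}n$ forces $a_n\sim\tfrac{19}{6}n$ by a one-line consistency check on the recurrence above. To get the exact formula I would solve the full-history recurrence by the standard route. Multiplying by $n^2(n-1)$ and writing $s_n\ce\sum_{j=0}^{n-2}j(n-1-j)a_j$, one checks that the first forward difference $s_{n+1}-s_n$ equals $\sum_{j=0}^{n-1} j a_j$ and the second forward difference equals $n a_n$; hence $s_{n+2}-2s_{n+1}+s_n = n a_n$, and substituting $s_n=\tfrac{n^2(n-1)}{6}(a_n-T_n)$ turns the problem into an inhomogeneous linear recurrence of order two with polynomial coefficients,
\[
  (n+2)^2(n+1)\,a_{n+2}\;-\;2n(n+1)^2\,a_{n+1}\;+\;n(n-3)(n+2)\,a_n \wwrel= R(n),
\]
where $R(n)=(n+2)^2(n+1)T_{n+2}-2n(n+1)^2T_{n+1}+n^2(n-1)T_n$ is an explicit polynomial. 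Equivalently, and often cleaner in practice, one passes to the generating function $a(z)=\sum_{n\ge 0}a_nz^n$, where the recurrence becomes a linear ODE whose integrating factor is a power of $(1-z)$, so that $a(z)$ is a combination of $(1-z)^{-k}$ and $(1-z)^{-k}\ln\!\tfrac{1}{1-z}$ terms; coefficient extraction then produces precisely the linear term, the harmonic contributions $H_n$ and $H_n n^{-1}$, and the rational constants.

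Finally I would pin down all integration constants from $a_0,a_1,a_2$ together with a direct hand computation of $a_3$ (which is why the closed form is asserted only for $n\ge 4$), arriving at
\[
  \E[C_n] \wwrel= \tfrac{19}{6}n-\tfrac{37}{5}H_n+\tfrac{1183}{100}-\tfrac{37}{5}H_n n^{-1}-\tfrac{71}{300}n^{-1},
\]
and the asymptotics $\E[C_n]\sim\tfrac{19}{6}n$ follow since $H_n=\Oh(\log n)$. The one genuinely delicate point is the bookkeeping: the precise rational constants $\tfrac{1183}{100}$ and $-\tfrac{71}{300}$ and the exact range $n\ge 4$ are sensitive both to the exact (not merely asymptotic) form of $T_n$ and to the handling of the tiny arrays of size $0,1,2$ in $\proc{Quickselect}$ and in $\proc{PartitionYaroslavskiy}$; the main effort therefore lies in carrying $T_n$ through the solution without slips and in checking the base cases, not in any conceptual difficulty.
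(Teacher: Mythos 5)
Your proposal is correct and follows essentially the same route as the paper: condition on $(P_n,Q_n,R_n)$, invoke randomness preservation to write the full-history recurrence $\E[C_n] = \E[T_n] + \frac{6}{n^2(n-1)}\sum_{p}(p-1)(n-p)\E[C_{p-1}]$, and solve it by passing to a generating function that satisfies an Euler-type differential equation with integrating factor a power of $(1-z)$, yielding $(1-z)^{-k}$ and $(1-z)^{-k}\ln\frac{1}{1-z}$ terms whose coefficients give the linear, constant, and harmonic contributions. The paper uses the ordinary generating function of $n\,\E[C_n]$ rather than of $\E[C_n]$, and does not mention your alternative reduction to a second-order linear recurrence via differencing of $s_n$, but these are cosmetic; your bookkeeping of the base cases (and why the closed form only kicks in at $n\ge 4$) matches the paper's treatment.
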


We use the notation $\, \eqlaw\, $ to mean (exact) equality in distribution,
and $\convD$ to mean weak convergence in distribution.

\begin{theorem}
\label{Theo:randomrank}
	Let $C_n$ be the number of comparisons made by Quickselect with 
	Yaroslavskiy's algorithm while searching
	for an order statistic chosen uniformly at random from all possible ranks.
	The random variables  $C^*_n \ce C_n / n$ converge in distribution
	and in second moments to a limiting random variable~$C^*$ that satisfies the
	distributional equations
	\begin{align}
			C^*  
		&\wwrel\eqlaw 
		  		U_{(1)} \indicator_{\{V < {U_{(1)}}\}} \, C^*
				\wbin+ 	\bigl(U_{(2)} - U_{(1)}\bigr) 
					\indicator_ {\{U_{(1)} < V < U_{(2)}\}} \,
						\Csp
			\label{eq:Cstar-direct-limit-equation}
		\\*&\wwrel\ppe {} 
				\wbin+	\bigl(1-U_{(2)}\bigr) \indicator_ {\{V > U_{(2)}\}} 
				\, \Cspp \wwbin+ 1 + U_{(2)} \bigl( 2 - U_{(1)} - U_{(2)} \bigr) \,,
			\nonumber
		\\		
				C^*
			&\wwrel\eqlaw 
			  X^* C^* + g(X^*, W^*) \;.
		\label{Eq:Cstar}
	\end{align}         
	where $\Csp$ and $\Cspp$ are independent copies of $C^*$, which are also
	independent of $(U_{(1)},U_{(2)},V)$ and $(X^*, W^*)$; 
	$(U_{(1)},U_{(2)})$ are the order statistics of two independent $\uniform(0,1)$
	random variables, $V$ is a $\uniform(0,1)$ random variable 
	independent of all else, and  $(X^*, W^*)$ have a bivariate density
	$$
		f(x,w) \wwrel= \begin{cases} 
						6x, &\text{for } 0 < x <  w < 1;\cr
	                     0, &\text{elsewhere},
	           \end{cases} 
	$$
	and $g(X^*, W^*)$ is a fair mixture%
	\footnote{%
		A fair mixture of three random variables is obtained by first choosing one of
		the three distributions at random, all three being equally likely, then
		generating a random variable from that distribution.
	} 
	of the three random variables
    $$
    	1 + W^*(2-X^*-W^*), 
    	\quad 1+(1+X^*-W^*)(2W^*-X^*), 
    	\quad 1+(1-X^*)(X^*+W^*) \;.
    $$
\end{theorem}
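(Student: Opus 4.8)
The plan is to use the contraction method in the Zolotarev metric $\zeta_2$, so the first task is to write down the exact distributional recurrence for $C_n$. Condition on the ranks $p<q$ of the two pivots selected in the first call and on where the randomly chosen rank $r$ falls relative to $p$ and~$q$. By randomness preservation (the preceding lemma) each recursively processed subarray is again a uniformly random permutation, and conditionally on $\{p<r<q\}$ the shifted rank $r-p$ is uniform on the medium subarray (and analogously in the two outer cases), so
\[
  C_n \ \eqlaw\ \indicator_{\{r<p\}}\,C^{(1)}_{p-1}
  \wbin+ \indicator_{\{p<r<q\}}\,C^{(2)}_{q-p-1}
  \wbin+ \indicator_{\{r>q\}}\,C^{(3)}_{n-q}
  \wbin+ T_n,
\]
where $C^{(1)},C^{(2)},C^{(3)}$ are independent copies of Quickselect's cost and $T_n$ is the number of comparisons in one call of Algorithm~\ref{Alg:Yaroslavskiy} on $n$ keys. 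Since $(p,q)$ is distributed like the ordered ranks of two among $n$ keys, $p/n\to U_{(1)}$ and $q/n\to U_{(2)}$ almost surely while $r/n$ becomes an independent $\uniform(0,1)$ variable $V$; and the one-pass analysis of Algorithm~\ref{Alg:Yaroslavskiy} (reused from our Quicksort work~\cite{Wild2012,Wild2013Quicksort}) gives $T_n/n \to 1+U_{(2)}\bigl(2-U_{(1)}-U_{(2)}\bigr)$ in $L_2$ conditionally on the ranks, with mean $\tfrac{19}{12}$ consistent with the coefficient $\tfrac{19}{6}$ of Proposition~\ref{Prop:grandave}. These are exactly the ingredients of~\eqref{eq:Cstar-direct-limit-equation}.

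Next I would analyse the limiting map $S(\mathcal L(Z))\ce\mathcal L\bigl(X^*Z+g(X^*,W^*)\bigr)$ with $Z$ independent of $(X^*,W^*)$. A change of variables in each of the three cases of~\eqref{eq:Cstar-direct-limit-equation}, taking $X^*$ to be the retained subarray's relative size and $W^*$ the natural second coordinate ($U_{(2)}$, $1-U_{(1)}$, $1-U_{(1)}$ in the three cases), shows that each case contributes the same amount $2x$ to the joint density of $(X^*,W^*)$ on $\{0<x<w<1\}$; hence unconditionally $(X^*,W^*)$ has density $6x$ there, conditionally on $(X^*,W^*)$ the three cases are equally likely, $g$ is the asserted fair mixture, and \eqref{eq:Cstar-direct-limit-equation} and \eqref{Eq:Cstar} are the same equation. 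One checks $\E[X^*]=\tfrac12$ and $\E[g(X^*,W^*)]=\tfrac{19}{12}$, so $S$ preserves the set $\mathcal M$ of laws with finite second moment and mean $\tfrac{19}{6}$; and since $\zeta_2$ is translation invariant and satisfies $\zeta_2(\mathcal L(AX),\mathcal L(AY))\le\E[A^2]\,\zeta_2(\mathcal L(X),\mathcal L(Y))$, the map $S$ is a contraction on $(\mathcal M,\zeta_2)$ with modulus $\E[(X^*)^2]=\tfrac{3}{10}<1$, so by the Banach fixed-point theorem it has a unique fixed point $C^*$.

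For the convergence $C_n^*\to C^*$ I would first obtain $\sup_n\E[(C_n^*)^2]<\infty$ by a crude induction on the exact recurrence (using that the retained subarray size $Z_n$ has $\E[(Z_n/n)^2]<1$ and $T_n=\Oh(n)$), which makes the relevant $\zeta_2$-distances finite. Because $\E[C_n^*]=\tfrac{19}{6}+\Oh(\log n/n)$ by Proposition~\ref{Prop:grandave} while $\E[C^*]=\tfrac{19}{6}$, I would pass to the recentred $\widetilde C_n^*\ce C_n^*+\bigl(\tfrac{19}{6}-\E[C_n^*]\bigr)$. Writing $\widetilde C_n^*$ and $C^*$ both as ``retained-size-fraction $\times$ recursive term $+$ toll'' and invoking the two quoted properties of $\zeta_2$, the standard contraction-method estimate gives
\[
  \zeta_2\bigl(\mathcal L(\widetilde C_n^*),\mathcal L(C^*)\bigr)
  \ \le\ \E\bigl[(Z_n/n)^2\bigr]\,\max_{0\le m\le n-2}\zeta_2\bigl(\mathcal L(\widetilde C_m^*),\mathcal L(C^*)\bigr)\,+\,R_n,
\]
where $R_n\to0$ collects the $L_2$-errors between $(Z_n/n,\,T_n/n)$ and $(X^*,g(X^*,W^*))$ and the vanishing recentring shift. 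As $\E[(Z_n/n)^2]\to\tfrac{3}{10}<1$ and the small-$m$ part of the maximum is absorbed because $\E[(Z_n/n)^2\indicator_{\{Z_n\le m_0\}}]\to0$, an induction yields $\zeta_2(\mathcal L(\widetilde C_n^*),\mathcal L(C^*))\to0$; this gives $\widetilde C_n^*\convD C^*$, hence $C_n^*\convD C^*$, and since $x\mapsto x^2/2$ lies in the test class defining $\zeta_2$ it also gives $\E[(\widetilde C_n^*)^2]\to\E[(C^*)^2]$, i.e.\ convergence in second moments.

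I expect the main obstacle to be the quantitative distributional control of the one-pass toll $T_n$ of Algorithm~\ref{Alg:Yaroslavskiy}: establishing that, conditionally on the pivot ranks, $T_n/n\to 1+U_{(2)}(2-U_{(1)}-U_{(2)})$ in $L_2$ with a rate strong enough to force $R_n\to0$. Once the toll is pinned down, the fixed-point step, the contraction estimate, and the (somewhat fiddly) change of variables linking the two displayed forms of the limit equation are routine.
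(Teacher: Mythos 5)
Your proposal is essentially the paper's own route: establish the distributional recurrence using randomness preservation, pass to the normalized limit via the contraction method in the Zolotarev metric $\zeta_2$ (with contraction modulus $\E[(X^*)^2]=\tfrac3{10}$), and then show the two fixed-point equations coincide by a change of variables identifying $(X^*,W^*)$. The only material differences from the paper's presentation are cosmetic: the paper proves convergence by invoking Neininger--Rüschendorf's ready-made contraction theorem (their Theorem~4.1, which packages exactly the uniform $L_2$-boundedness, recentring, and small-subproblem absorption you carry out by hand), and the paper proves the equivalence of~\eqref{eq:Cstar-direct-limit-equation} and~\eqref{Eq:Cstar} by the identical change of variables but phrased at the level of characteristic functions rather than densities. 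Your density computation is correct: in all three cases the Jacobian is $1$, so the unnormalized joint contribution is $2x$, the marginal of $(X^*,W^*)$ is $6x\,\indicator_{\{0<x<w<1\}}$, and the three branches of $g$ are conditionally equally likely. One small over-caution: no \emph{rate} of convergence for $T_n/n\to T^*$ is needed, only $L_2$-convergence, which is already Corollary~\ref{Cor:Tstar}; so the obstacle you flag at the end is in fact dispatched by that cited corollary.
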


As a corollary of Theorem~\ref{Theo:randomrank}, 
we find that 
$\V[C_n^*] \sim  \frac {25} {36} n^2 = 0.69\overline4 n^2$, 
as \mbox{$n \to \infty$}.
Another corollary is to write $C^*$ explicitly as a sum of
products of independent random variables: 
\begin{align*}
		 C^*
	&\wwrel\eqlaw 
		\sum_{j=1}^\infty g(X_j, W_j)\Bigl(\: \prod_{k=1}^{j-1} X_j \Bigr),
	\qquad\text{where}\qquad X_j \wwrel\eqlaw X^*,
	\qquad W_j \wwrel\eqlaw W^*,
\end{align*} 
and $\{X_j\}_{j=1}^\infty$ is a family of totally independent random
variables,%
\footnote{%
	For the usual definition of total independence see any
	classic book on probability, such as \cite[p.\,53]{Chung2001}, for example.
} 
and so is 
$\{W_j\}_{j=1}^\infty$.

\bigskip\noindent
\textbf{Remark} {\sl
	Let $\{V_j\}_{j=1}^\infty$ and $\{Z_k\}_{k=1}^\infty$ be two families of
	totally independent random variables, and assume~$V_j$ is independent of~$Z_k$
	for each $j, k \ge 1$. Sums of products of independent random variables of the form
	$$ V_1 + V_2 Z_1 + V_3 Z_1 Z_2 + V_4 Z_1 Z_2 Z_3 + \cdots$$
	are called perpetuities. They appear in  
	financial mathematics~\cite{Embrechts1997},  
	in stochastic recursive	algorithms~\cite{Iksanov2009}, 
	and in many other areas.
\/}

\smallskip
\begin{proposition}
\label{Prop:extremalave}
	Let $\hat C_n$ 
	be the number of comparisons made by
	Quickselect with Yaroslavskiy's algorithm to find the smallest key of a random
	input of size~$n$. We then have
	\begin{align*} 
			\E\bigl[\hat C_n\bigr]
		&\wwrel=  
			\frac 1 {24n(n-1)(n-2)} \Bigl(
					57n^4 - 48n^3 H_n - 178 n^3 + 144 n^2 H_n 
		\\*&\qquad\qquad {}  
					+ 135 n^2 -96nH_n - 14 n + 24\Bigr) \,,
		\qquad \text{for } n \ge 4 \,,
		\\[-.5ex]
	         &\wwrel\sim \tfrac {19} 8 n \;.
	\end{align*} 
\end{proposition}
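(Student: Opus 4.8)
The plan is to set up and solve an exact recurrence for $e_n \ce \E[\hat C_n]$, the expected number of comparisons performed by Yaroslavskiy's Quickselect when it searches for the minimum ($r = 1$) of a random array of size $n$. The first step is to count the comparisons performed in the first partitioning stage. This count is itself a random variable depending on the ranks $(p,q)$ of the two pivots; from our earlier work on Quicksort \cite{Wild2013Quicksort} we know the expected number of comparisons in one Yaroslavskiy partitioning step of an array of size $n$, and since it is a polynomial of low degree in $n$ (plus harmonic-number corrections coming from the inner \kw{while} loop at line~\ref*{lin:yaroslavskiy-comp-3}), I would reuse that expression rather than re-deriving it. Call the expected partitioning cost $a_n$.

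Next I would write the recursion. Because $r=1$, the minimum never coincides with either pivot (for $n \ge 2$ the pivots have ranks $p \ge 1$ and $q \ge 2$, and in fact $p \ge 1$ means we could have $p = 1$, but rank $1$ can only equal $p$ with probability $\Oh(1/n^2)$-ish — more precisely, the minimum among $n$ keys is the smaller pivot with probability $\tfrac{2}{n}\cdot\tfrac{1}{?}$; I need to be careful here). Conditioning on the unordered pivot ranks, which are a uniformly random $2$-subset of $\{1,\dots,n\}$, the sought element lies in the left subarray $A[1\range p-1]$ whenever $p \ge 2$, it equals the left pivot when $p = 1$, and it never lies in the middle or right subarray. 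So with $P \ce \min$ of the two pivot ranks, $\Pr[P = 1] = \tfrac{2(n-1)}{n(n-1)} = \tfrac{2}{n}$ (the smaller of two distinct ranks equals $1$ iff one of the two chosen ranks is $1$), and given $P = k \ge 2$ the left subarray has size $k-1$ and is again a uniformly random permutation by the Lemma. Hence
\begin{align*}
	e_n \wwrel= a_n \wbin+ \sum_{k=2}^{n} \Pr[P = k]\, e_{k-1}
	\wwrel= a_n \wbin+ \sum_{k=2}^{n} \frac{2(n-k)}{n(n-1)}\, e_{k-1} \,,
\end{align*}
using that the number of $2$-subsets of $\{1,\dots,n\}$ whose minimum is $k$ is $n-k$. (If $P = 1$ the recursion stops with cost $0$ beyond the partitioning, since the pivot is returned; this is already reflected by the sum starting at $k=2$.) After reindexing, this is a full-history linear recurrence with polynomial-plus-harmonic coefficients.

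The third step is to solve the recurrence exactly. The standard device is differencing: form $n(n-1)e_n - (n-1)(n-2)e_{n-1}$ to telescope the convolution sum down to a short-memory recurrence, then iterate (or guess-and-verify). Because $a_n$ involves $H_n$, the solution will too, and the shape of the claimed answer — a degree-$4$ polynomial in $n$ divided by $24n(n-1)(n-2)$, with $H_n$ multiplying the lower-degree terms — tells me exactly the ansatz to plug in: $e_n = \bigl(\alpha n^4 + \beta n^3 H_n + \gamma n^3 + \delta n^2 H_n + \varepsilon n^2 + \zeta n H_n + \eta n + \theta\bigr)/\bigl(24 n(n-1)(n-2)\bigr)$, match coefficients using $H_n = H_{n-1} + 1/n$, and pin down the constants from small-$n$ initial values (the boundary cases of size $\le 2$ described in the text, plus $n = 3$). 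The asymptotics $e_n \sim \tfrac{19}{8}n$ then falls out as $\tfrac{57}{24}n$, consistent with $\tfrac{19}{6}$ being the coefficient for a random rank and the well-known fact that extremal selection is cheaper.

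The main obstacle I anticipate is pinning down $a_n$ correctly, including its lower-order and harmonic terms: the inner loop that skips over already-large elements (line~\ref*{lin:yaroslavskiy-comp-3}) makes the per-step comparison count genuinely subtle, and for the minimum-selection average every lower-order term in $a_n$ propagates into the final closed form (unlike in the $n\ln n$ Quicksort analysis, where it is absorbed into $\Oh(n)$). I would extract $a_n$ cleanly from the partitioning analysis in \cite{Wild2013Quicksort}, double-check it against a direct count for $n=2,3,4$, and only then feed it into the recurrence; the remaining algebra (differencing, ansatz, coefficient matching) is mechanical and best verified by a computer-algebra cross-check.
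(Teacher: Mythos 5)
Your recurrence setup is correct and matches the paper's: $\hat C_n \eqlaw \hat C_{P_n-1} + T_n$ with $\Prob(P_n = p) = (n-p)/\binom n2$, giving the full-history recurrence $\E[\hat C_n] = \E[T_n] + \frac{1}{\binom n2}\sum_{p=1}^n (n-p)\,\E[\hat C_{p-1}]$, and your handling of the boundary case $P_n = 1$ (where $\E[\hat C_0]=0$) is consistent with it. The solution method you propose, however, differs from the paper's: you suggest differencing the recurrence to kill the convolution sum and then matching an ansatz, whereas the paper introduces the ordinary generating function $\hat A(z) = \sum_n \E[\hat C_n] z^n$, multiplies by $n(n-1)z^n$, sums, and solves the resulting Euler differential equation before extracting coefficients. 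Both routes are standard and should reach the same closed form; the generating-function route mechanizes the convolution neatly, while your differencing route avoids ODE machinery but needs a carefully prepared ansatz.

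One genuine misconception you should correct before carrying this out: you write that the toll $a_n = \E[T_n]$ carries harmonic-number corrections from the inner loop at line~\ref*{lin:yaroslavskiy-comp-3}, and that the $H_n$ in the final answer is inherited from $a_n$. That is false. By Corollary~\ref{Cor:expectation-Tn}, $\E[T_n] = \tfrac{19}{12}(n+1) - 3$, exactly linear with no $H_n$ term whatsoever; the three bonus comparisons from the inner loop contribute a constant in expectation, not a logarithm. The harmonic numbers in Proposition~\ref{Prop:extremalave} arise entirely from iterating the full-history recurrence (partial-fraction pieces like $1/(k(k-1))$ sum to harmonic-type quantities), the same mechanism as in classic single-pivot Quickselect. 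This does not derail your plan, but if you go in expecting $H_n$ terms in $a_n$ you will waste time looking for them and may distrust the (correct) linear toll; and your ansatz-matching step should be justified by the structure of the recurrence, not by an incorrect belief about the toll.
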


\begin{theorem}
\label{Theo:extremal}
	Let $\hat C_n$
	be the number of comparisons made by Quickselect under Yaroslavskiy's
	algorithm on a random input of size~$n$ to find the smallest order 
	statistic. 
	The random variables $\hat C^* = \hat C_n  / n$ converge
	in distribution and in second moments 
	to a limiting random variable $\hat C^*$
	satisfying the distributional equation 
	\begin{align}
		\hat  C^* 
		&\wwrel\eqlaw 
			U_{(1)} \hat C^* 
			\wbin+ 1 + U_{(2)} \bigl( 2 - U_{(1)} - U_{(2)} \bigr),
	\label{Eq:Chat}
	\end{align} 
	where $U_{(1)}$ and $U_{(2)}$ are respectively the minimum and
	maximum of two independent random variables, both distributed uniformly on $(0,1)$.
\end{theorem}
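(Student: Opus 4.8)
Since $\hat C_n$ counts the comparisons Quickselect makes for rank $r=1$, the case distinction in Algorithm~\ref{Alg:osqs} simplifies drastically: after the first partitioning, let $p<q$ denote the ranks of the two pivots in the current subarray, and let $T_n$ be the number of key comparisons made inside the corresponding call to Algorithm~\ref{Alg:Yaroslavskiy}. If $p=1$, the smallest element \emph{is} the first pivot and is returned; otherwise $\mathrm{sgn}(1-p)=\mathrm{sgn}(1-q)=-1$, so $c=-2$ and the algorithm recurses on $\arrayA[1\range p-1]$, again seeking the minimum. By randomness preservation (established above), that left subarray is --- conditionally on its length $p-1$ --- a uniformly random permutation, and, given $(p,q)$, independent of $T_n$; hence
\begin{align*}
	\hat C_n &\wwrel\eqlaw T_n \wbin+ \indicator_{\{p\ge 2\}}\,\hat C_{p-1}
	\qquad(n\ge 2),
\end{align*}
where $(p,q)$ is a uniformly random $2$-subset of $\{1,\dots,n\}$ (so $\Prob[p=i]=2(n-i)/(n(n-1))$ and $\Prob[p=1]=2/n\to0$) and, conditionally on $(p,q)$, the term $\hat C_{p-1}$ is an independent copy of itself on a random array of size $p-1$. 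This is a streamlined version of the recurrence behind Theorem~\ref{Theo:randomrank} in which only the left subarray is ever visited, so the whole argument parallels that of Theorem~\ref{Theo:randomrank} but carries a single recursive term.

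Next I would pass to the scaled limit. One has the joint convergence $(p/n,q/n)\convD(U_{(1)},U_{(2)})$, the order statistics of two independent $\uniform(0,1)$ variables; in particular $(p-1)/n\to U_{(1)}$ in $L_2$, with $U_{(1)}$ of density $2(1-u)$, $\E[U_{(1)}]=\tfrac13$ and $\E[U_{(1)}^2]=\tfrac16$. For the toll, the same bookkeeping of the comparisons in lines~\ref*{lin:yaroslavskiy-comp-0}, \ref*{lin:yaroslavskiy-comp-1}, \ref*{lin:yaroslavskiy-comp-2}, \ref*{lin:yaroslavskiy-comp-3} and \ref*{lin:yaroslavskiy-comp-4} that underlies Proposition~\ref{Prop:extremalave} yields a closed formula for $\E[T_n\mid p,q]$, quadratic in $p$ and $q$, whose scaled form converges, as $(p/n,q/n)\to(u_1,u_2)$, to $1+u_2(2-u_1-u_2)$; combined with the finer estimate $\V[T_n\mid p,q]=\Oh(n)$ this forces $T_n/n$ to concentrate on its conditional mean, so that $T_n/n\to 1+U_{(2)}(2-U_{(1)}-U_{(2)})$ in $L_2$ jointly with $(p-1)/n\to U_{(1)}$, and in particular the toll becomes asymptotically independent of the recursive term. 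As a consistency check, $\E\bigl[1+U_{(2)}(2-U_{(1)}-U_{(2)})\bigr]=\tfrac{19}{12}$, and substituting this and $\E[U_{(1)}]=\tfrac13$ into \eqref{Eq:Chat} reproduces $\E[\hat C^*]=\tfrac{19}{8}$, matching Proposition~\ref{Prop:extremalave}.

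I would then finish with the contraction method in the Zolotarev metric $\zeta_2$, exactly as for Theorem~\ref{Theo:randomrank}. An induction on the recurrence --- using the deterministic bound $T_n=\Oh(n)$ and $\E[U_{(1)}^2]=\tfrac16<1$ --- shows $\E[\hat C_n]=\Oh(n)$ and $\E[\hat C_n^2]=\Oh(n^2)$, so the laws of $\hat C_n/n$ live in the space of distributions with finite second moment, while Proposition~\ref{Prop:extremalave} pins down the first moment, $\E[\hat C_n]=\tfrac{19}{8}n+o(n)$. On that space the map $\mu\mapsto\mathcal L\bigl(U_{(1)}Y+1+U_{(2)}(2-U_{(1)}-U_{(2)})\bigr)$, with $Y\sim\mu$ independent of $(U_{(1)},U_{(2)})$, is a $\zeta_2$-contraction with factor $\E[U_{(1)}^2]=\tfrac16$, and hence has a unique fixed point $\hat C^*$, equivalently the almost-sure and $L_2$ sum of the perpetuity $\sum_{j\ge1}\bigl(1+U_{(2),j}(2-U_{(1),j}-U_{(2),j})\bigr)\prod_{k=1}^{j-1}U_{(1),k}$ (convergent since $\E\bigl[\prod_{k=1}^{j}U_{(1),k}\bigr]=3^{-j}$). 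Feeding the rescaled recurrence, the limits $(p-1)/n\to U_{(1)}$ and $T_n/n\to 1+U_{(2)}(2-U_{(1)}-U_{(2)})$, and the contraction factor $\tfrac16<1$ into the standard contraction theorem gives $\zeta_2(\hat C_n/n,\hat C^*)\to0$, which yields both the weak convergence and the convergence of the first two moments claimed in the statement.

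The genuinely delicate step is the toll: isolating the scaled limit $1+U_{(2)}(2-U_{(1)}-U_{(2)})$ of the number of comparisons in one run of Algorithm~\ref{Alg:Yaroslavskiy} as a function of the two pivot ranks, together with the concentration $\V[T_n\mid p,q]=\Oh(n)$ needed to replace $T_n$ by its conditional mean (this is also what secures the asymptotic independence of toll and recursive call). Once that is in hand, the recurrence, the fixed point / perpetuity, and the $\zeta_2$-contraction are routine --- indeed a degenerate special case of the argument for Theorem~\ref{Theo:randomrank}.
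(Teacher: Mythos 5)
Your proof is correct and follows essentially the same route as the paper: establish the one-term recurrence $\hat C_n \eqlaw \hat C_{P_n-1} + T_n$, pass to the limit $T_n/n \convL{2} T^* = 1 + U_{(2)}(2-U_{(1)}-U_{(2)})$, and apply the Neininger--R\"uschendorf $\zeta_2$-contraction theorem with Lipschitz factor $\E[U_{(1)}^2] = \tfrac16 < 1$ (the paper invokes their Theorem~5.1 rather than 4.1, a variant tailored to recurrences with unit coefficients, and cites Corollary~\ref{Cor:Tstar} for the toll limit instead of re-deriving it via conditional-mean concentration as you do). One small wording slip: you say the toll becomes ``asymptotically independent of the recursive term,'' but $T^*$ is \emph{not} independent of the coefficient $U_{(1)}$; what holds, and what the fixed-point equation needs, is that $\hat C^*$ is independent of the pair $(U_{(1)}, T^*)$.
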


As a corollary, we find for $\hat C_n$, as $n\to\infty$,
\begin{align*}
		\E[\hat C_n]
	&\wwrel\sim
		\tfrac{19}6 n, 
	\qquad\text{and}\qquad
		\V[\hat C_n] 
	\wwrel\sim 
		\tfrac {1261} {4800}\, n^2 
	\wwrel= 
		0.262708\overline3 n^2 \;.
\end{align*} 
Another corollary is that $\hat C^*$ can be written explicitly
as a sum of products of independent random variables: 
\begin{align*}
		\hat C^*
	&\wwrel\eqlaw 
		\sum_{j=1}^\infty Y_j\Bigl( \prod_{k=1}^{j-1} \hat X_j \Bigr),
       \quad\text{where}\quad 
       		\hat X_j \wrel\eqlaw U_{(1)}, \quad
			Y_j \wrel\eqlaw 1 + U_{(2)} \bigl(2-U_{(1)} - U_{(2)} \bigr),
\end{align*} 
here $\{\hat X_j\}_{j=1}^\infty$ is a family of totally independent random
variables, and so is  $\{Y_j\}_{j=1}^\infty$.

Similar results can be developed for $\check C_n$, the cost for 
finding the maximal order statistics.
We note that it is not exactly symmetrical with $\hat C_n$. 
For instance, while
$\check C_n$ has the same asymptotic mean as  $\hat C_n$, it has 
a different asymptotic variance, which is namely
$$\V[\check C_n] 
	\wwrel\sim 
		\tfrac {1717}{4800}\, n^2 
	\wwrel= 
		0.357708\overline3  n^2 \;.$$
In fact, similar results can be developed for the number of comparisons
needed for any extremal order statistic (very small or very large), that is 
when the sought rank $r$ is $o(n)$ or $n-o(n)$.

\section{Organization}

The rest of the paper is devoted to the proof and is organized as follows.
Section~\ref{Sec:notation} sets up fundamental components
of the analysis, and working notation that will be used throughout. 
In Section~\ref{Sec:Yaroslavskiy}, we present a probabilistic analysis
of Yaroslavskiy's algorithm. The analysis under rank smoothing
is carried out in Section~\ref{Sec:grand}, which has two subsections:
Subsection~\ref{Subsec:grandave} is for the exact grand average,
and Subsection~\ref{Subsec:granddistribution} is for the limiting
grand distribution via the contraction method.
We say a few words in that subsection on the
origin and recent developments of the method and its success 
in analyzing divide-and-conquer algorithms. 

In Section~\ref{Sec:extremal}, we import the methodology to obtain results for
extremal order statistics. 
Again, Section~\ref{Sec:extremal} has two subsections:
Subsection~\ref{Subsec:extremalave} is for the exact average,
and Subsection~\ref{Subsec:extremaldistribution} is for the limiting
distribution.
We conclude the paper in Section~\ref{Sec:remarks}
with remarks on the overall perspective of the use of Yaroslavskiy's
algorithm in Quicksort and Quickselect. 

The appendices are devoted to proving some technical points;
Appendix~\ref{app:definition-spacings} lays common foundations and
Appendices~\ref{app:proof-convergence-grand-avg}
and~\ref{app:proof-convergence-extremal} formally show convergence
to limit law for random ranks, respectively extremal ranks.
Finally, Appendix~\ref{app:proof-lemma-equivalantdistribution} proof a 
technical lemma showing that $C^*$ is distributed like a perpetuity.

\section{Preliminaries and Notation}
\label{Sec:notation}

We shall use the following standard notation: 
$\one {{\cal E}}$ is the \emph{indicator random variable} of the event $\cal E$
that assumes the value 1, when $\cal E$ occurs, and assumes the
value~0, otherwise. 
$\Prob(\mathcal E)$ is the probability that $\mathcal E$ occurs and
$\E[X]$ denotes the expected value of random variable $X$.

The notation
$\smash\almostsure$ stands for \emph{convergence almost surely}.
By $\|X\|_p \ce \E[|X|^p]^{1/p}$ with $1\le p<\infty$,
we denote the \emph{$L_p$-norm} of random variable
$X$, and we say random variables $X_1,X_2,\ldots$ \emph{converge in $L_p$ to
$X$}, shortly written as \smash{$X_n \convL{\smash{p}} X$}, when 
$\lim_{n\to\infty} \| X_n - X \|_p = 0$.
Unless otherwise stated, 
all asymptotic equivalents and bounds concern the limit as $n\to\infty$.

Let $\Hyper(n, s, w)$ be a \emph{hypergeometric} random variable; 
that is
the number of white balls in a size $s$ sample of balls taken at random without
replacement (all subsets of size $s$ being equally likely) from an urn
containing a total of $n$ white and black balls, of which $w$ are white. 

Let $P_n$ be the (random) \emph{rank} of the smaller of the two pivots, and
$Q_n$ be the (random) rank of the larger of the two. 
For a random permutation, $P_n$ and $Q_n$ have a joint distribution uniform over
all possible choices of a distinct pair of numbers from $\{1, \ldots, n\}$. 
That is to say,
\begin{align*}
		\Prob(P_n = p, Q_n = q) 
	&\wwrel= 
		\frac1{\tbinom n2}, 
	\qquad \mbox{for\ } 1 \le p < q \le n.
\end{align*}
It then follows that $P_n$ and $Q_n$ have the marginal distributions
\begin{align*}
		\Prob(P_n = p) 
	&\wwrel= \frac {n-p} {\tbinom n2}\,,
	 	\qquad \mbox {for\ }  p  = 1,\ldots, n-1 \;;
\\
		\Prob(Q_n = q) 
	&\wwrel= \frac {q-1} {\tbinom n2}\,, 
		\qquad \mbox {for\ }  q= 2, \ldots, n\;.
\end{align*}
Let $U_{(1)}$, and $U_{(2)}$ be the order
statistics of $U_1$ and $U_2$, 
two independent continuous $\uniform(0,1)$
random variables, i.\,e.\  
$$
	U_{(1)} \wrel= \min\{U_1, U_2\} \,, 
	\qquad \mbox{and}\qquad 
	U_{(2)} \wrel= \max\{U_1, U_2\} \;.
$$
The two order statistics have the joint density
\begin{align}
\label{eq:joint-density-u1u2}
		f_{(U_{(1)}, U_{(2)})} (x, y) 
	&\wwrel= \begin{cases}
		2, 	& \mbox{for\ } 0 < x < y < 1; \\
		0, 	& \mbox{elsewhere,}
	\end{cases}
\end{align}
and consequently have the marginal densities
\begin{align}
\label{eq:marginal-densities-u1-u2}
		f_{U_{(1)}} \! (x) 
	&\wwrel= \begin{cases}
		2(1-x), 	& \text{for } 0 < x  < 1; \\
		0, 		& \text{elsewhere,}
	\end{cases}
	\quad\text{and}\quad 
		f_{U_{(2)}} \! (y) 
	\wwrel= \begin{cases}
		2y,	& \text{for } 0 < y  < 1; \\
		0, 	& \text{elsewhere.}
	\end{cases} 
\end{align}
%

The ensuing technical work requires all the variables to be defined
on the same probability space. 
Let $(U_1,U_2, V)$ be three independent $\uniform(0,1)$ random variables
defined on the probability space $([0,1], \borel_{[0,1]}, \lambda)$,
where $\borel_{[0,1]} = B \cap [0,1]$, for $B$ the usual
\weakemph{Borel sigma field} on the real line, and $\lambda$ is the
\weakemph{Lebesgue measure}.
We have
\begin{align}
\label{eq:distribution-Rn}
		R_n 
	&\wwrel\eqlaw 
		\lceil \.nV \rceil 
	\wwrel\eqlaw 
		\uniform[1 \range n] \;.
\end{align}

\section{Analysis of Yaroslavskiy's Dual-Partitioning}
\label{Sec:Yaroslavskiy}
An analysis of Quickselect using Yaroslavskiy's algorithm
(Algorithm~\ref{Alg:Yaroslavskiy}) requires a careful examination of this
algorithm. 
In addition, being a novel partitioning method, it is a goal
to analyze the method in its own right. 

Let $T_n$ be the number of comparisons exercised in the first call to
Yaroslavskiy's algorithm.
This quantity serves as a \emph{toll function} for the recurrence relation
underlying Quickselect:
For unfolding the recurrence at size $n$, we have to ``pay'' a toll of~$T_n$
comparisons.
The distribution of $T_n$  below
is given implicitly in the arguments of \cite{Wild2012} and
later used explicitly in \cite{wild2012thesis,Wild2013Quicksort}.

\begin{lemma}
\label{Lm:dist-Tn}
	The number of comparisons $T_n$ of Yaroslavskiy's partitioning method satisfies
	the following distributional equation conditional on $(P_n,Q_n)$:
	\begin{align*}
	T_n &\wwrel\eqlaw  n-1 \bin+ \Hyper(n-2, n - P_n - 1, Q_n-2)
\\*	    &\wwrel{\phantom{\eqlaw}} \phantom{n-1} \bin+ 
						\Hyper(n-2, \like{n-P_n-1,{}}{Q_n-2,{}} n-Q_n) 
\\*	    &\wwrel{\phantom{\eqlaw}} \phantom{n-1} \bin+ 
						3 \cdot \indicator_{\{A[Q_n] > \max\{A[1],A[n]\}\}}
	    \, .
	\end{align*}
\end{lemma}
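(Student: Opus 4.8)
The plan is to trace through Algorithm~\ref{Alg:Yaroslavskiy} and count the comparisons contributed by each line, conditioning on the pivot ranks $(P_n,Q_n)$. I would first classify the $n-2$ non-pivot elements as \emph{small} (rank $< P_n$), \emph{medium} ($P_n < \text{rank} < Q_n$), or \emph{large} (rank $> Q_n$); there are $P_n - 1$ small, $Q_n - P_n - 1$ medium, and $n - Q_n$ large keys. The counting splits naturally by where a comparison occurs. Line~\ref*{lin:yaroslavskiy-comp-0} always contributes exactly $1$ comparison. Every element reached by the pointer $k$ is first tested against $p$ at line~\ref*{lin:yaroslavskiy-comp-1}; a key observation I would establish is that $k$ sweeps over exactly $n - 2 - (\text{number of large keys that are ``skipped over'' from the right by }g)$ positions before the loop ends, but in fact it is cleaner to argue that the $g$-pointer from the right and the $k$-pointer from the left jointly cover all $n-2$ non-pivot slots, so line~\ref*{lin:yaroslavskiy-comp-1} is executed once for each element that $k$ visits. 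This gives the leading $n-1$ after combining with line~\ref*{lin:yaroslavskiy-comp-0}; the remaining terms come from the extra comparisons against $q$.

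Next I would account for line~\ref*{lin:yaroslavskiy-comp-2}, executed whenever $k$ visits an element that is $\ge p$, i.e.\ a medium or large element \emph{that is reached by $k$ rather than by $g$}. Here the hypergeometric structure enters: among the first several positions scanned, which ones end up visited by $k$ versus by $g$ is determined by a without-replacement sampling argument, and the count of large-or-medium keys among the $k$-visited elements is distributed as a sum of hypergeometrics. I would similarly handle line~\ref*{lin:yaroslavskiy-comp-3} (the inner \textbf{while} decrementing $g$), which costs one comparison per large key encountered from the right while $k<g$, plus possibly one extra when the loop terminates because $k=g$; and line~\ref*{lin:yaroslavskiy-comp-4} (test of $A[g]$ against $p$), executed once per iteration in which the \textbf{else}-branch at line~\ref*{lin:yaroslavskiy-comp-2} is taken. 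Carefully bookkeeping these contributions and identifying the relevant sub-populations (which fraction of the $n-2$ elements is sampled into the ``$k$-side'' versus the ``$g$-side'', and how many whites of each colour land there) is what produces the two $\Hyper(n-2,\,\cdot\,,\,\cdot\,)$ terms with parameters $(n - P_n - 1)$ white $=$ medium-or-large, sample size $(Q_n - 2)$, and $(n - Q_n)$ white $=$ large, respectively.

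The main obstacle, and where I would spend most of the effort, is the precise combinatorial identification of which elements are inspected by each pointer, because the behaviour of $g$ depends on the data already seen, so the naive ``each element is compared to $q$ once'' overcounts or undercounts by boundary effects. The cleanest route is to reuse the invariant-based analysis from \cite{Wild2012,wild2012thesis,Wild2013Quicksort}: there the exact comparison count of one Yaroslavskiy partitioning step is derived, and I would simply cite that the count equals $n-1$ plus the number of large keys plus the number of medium-or-large keys among a prefix determined by the position of $Q_n$'s element, and then translate ``number of large/medium keys in a random prefix of given length'' into hypergeometric random variables using the random-permutation model. The final additive term $3\cdot\indicator_{\{A[Q_n] > \max\{A[1],A[n]\}\}}$ accounts for the corner case in which the larger pivot's slot is occupied by a large key that must be relocated by the triple-swap branch near line~\ref*{lin:yaroslavskiy-swap-3}/\ref*{lin:yaroslavskiy-swap-5}, costing an extra constant number of comparisons; I would verify this by inspecting the loop's terminal configuration. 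Once the three sources are isolated, assembling them into the stated distributional identity conditional on $(P_n,Q_n)$ is routine.
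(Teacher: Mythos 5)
The paper does not actually prove Lemma~\ref{Lm:dist-Tn}; it simply points to the references \cite{Wild2012,wild2012thesis,Wild2013Quicksort} where the distribution of $T_n$ is derived. Your proposal ultimately does the same thing (``the cleanest route is to reuse the invariant-based analysis from \cite{Wild2012,wild2012thesis,Wild2013Quicksort}''), so at the level of the paper's own treatment your approach is in agreement.

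That said, your informal sketch of the counting argument contains two slips you should be aware of before leaning on it. First, you write that ``line~\ref*{lin:yaroslavskiy-comp-1} is executed once for each element that $k$ visits; this gives the leading $n-1$ after combining with line~\ref*{lin:yaroslavskiy-comp-0}.'' That does not follow: $k$ visits only about $Q_n-2$ of the non-pivot slots, so comparisons at line~\ref*{lin:yaroslavskiy-comp-1} alone are far fewer than $n-2$. The $n-1$ base is obtained only after you also count the $g$-side comparisons at lines~\ref*{lin:yaroslavskiy-comp-3}--\ref*{lin:yaroslavskiy-comp-4}, which together with line~\ref*{lin:yaroslavskiy-comp-1} give each of the $n-2$ non-pivot keys at least one pivot comparison, plus the one at line~\ref*{lin:yaroslavskiy-comp-0}. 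Second, and more substantively, the hypergeometric structure does \emph{not} arise because ``which [positions] end up visited by $k$ versus by $g$ is determined by a without-replacement sampling argument.'' Given $(P_n,Q_n)$, the crossing point where $k$ and $g$ meet is pinned down deterministically by the loop invariants (the pointers meet at the final slot of the larger pivot). What is random\,---\,and what yields the $\Hyper(n-2,\,\cdot\,,\,\cdot\,)$ terms\,---\,is the \emph{content} of those fixed positions: conditional on $(P_n,Q_n)$ the $n-2$ non-pivot keys occupy positions $2,\dots,n-1$ in uniformly random order, so the number of medium-or-large (resp.\ large) keys falling into the block of positions scanned by $k$ is hypergeometric. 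These are exactly the points the cited analysis settles carefully, so your plan to defer to it is sound, but the sketch as written would not survive on its own.
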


\begin{theopargself}
\begin{corollary}[\cite{Wild2012}]
\label{Cor:expectation-Tn}
	The expectation of $T_n$ is given by
$$\E[T_n] \wwrel= \tfrac {19} {12} (n+1) - 3.$$
\end{corollary}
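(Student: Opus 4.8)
The plan is to derive $\E[T_n]$ from the conditional distributional identity of Lemma~\ref{Lm:dist-Tn} in two stages: first take the conditional expectation $\E[\,\cdot\mid P_n,Q_n]$, then average over the joint law of the pivot ranks recorded in Section~\ref{Sec:notation}. The only probabilistic input needed for the first stage is that a $\Hyper(N,s,w)$ variable has mean $sw/N$. Applying this to the two hypergeometric summands in Lemma~\ref{Lm:dist-Tn} gives
\begin{align*}
	\E[T_n \mid P_n, Q_n]
	&\wwrel=
		(n-1) + \frac{(n-P_n-1)(Q_n-2)}{n-2} + \frac{(Q_n-2)(n-Q_n)}{n-2} \\*
	&\wwrel{\phantom{=}} {}+ 3\,\Prob\bigl(A[Q_n] > \max\{A[1],A[n]\} \mid P_n,Q_n\bigr) \,,
\end{align*}
so the whole computation reduces to evaluating $\E$ of the three remaining terms.

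The indicator term is the one step that is not pure bookkeeping. I would argue combinatorially inside the random‑permutation model: conditioned on pivot ranks $P_n=p<q=Q_n$, array slots $1$ and $n$ carry the two pivots while slots $2,\dots,n-1$ carry the remaining $n-2$ keys in uniformly random order. If $q\le n-1$, slot $q$ therefore holds a uniformly random non‑pivot key, of which exactly $n-q$ exceed the larger pivot (which is the $q$‑th order statistic, since $p<q$); if $q=n$, slot $q$ holds a pivot and the event is impossible. In both cases the conditional probability equals $(n-Q_n)/(n-2)$, so this term contributes $3\,\E\bigl[(n-Q_n)/(n-2)\bigr] = \tfrac{3}{n-2}\bigl(n-\E[Q_n]\bigr) = \tfrac{3}{n-2}\cdot\tfrac{n-2}{3} = 1$, using $\E[Q_n]=\tfrac23(n+1)$ obtained from the marginal of $Q_n$ in Section~\ref{Sec:notation}.

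It remains to average the two hypergeometric contributions against the uniform law $\Prob(P_n=p,Q_n=q)=1/\tbinom n2$. Carrying out the inner sum over $p$ first, the combined term collapses to $\tfrac{1}{2\binom n2}\sum_{q=2}^{n}(q-1)(q-2)(4n-2-3q)$, which one evaluates with the elementary formulas for $\sum q$, $\sum q^2$, $\sum q^3$ to obtain $(n-2)(7n-17)/12$; dividing by $n-2$ leaves $(7n-17)/12$. Adding the three contributions, $\E[T_n] = (n-1) + \tfrac{7n-17}{12} + 1 = \tfrac{19n-17}{12} = \tfrac{19}{12}(n+1)-3$. Thus the only genuine obstacle is correctly interpreting ``the key in slot $Q_n$'' and checking the boundary case $q=n$ so that the clean formula $(n-Q_n)/(n-2)$ holds uniformly; everything else is the mean of a hypergeometric variable together with summing low‑degree polynomials.
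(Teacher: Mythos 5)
Your derivation is correct. The paper itself does not prove Corollary~\ref{Cor:expectation-Tn}; it is attributed to~\cite{Wild2012} and simply cited, so there is no in-paper argument to compare against. Your route via Lemma~\ref{Lm:dist-Tn} is the natural one: the hypergeometric means are applied correctly, the indicator probability $(n-Q_n)/(n-2)$ is justified properly (including the boundary case $q=n$, where the probability degenerates to $0$ consistently with the formula), and the final polynomial summation checks out. The only related computation in the paper is the Remark following Corollary~\ref{Cor:Tstar}, which recovers the leading coefficient $\tfrac{19}{12}$ by integrating against the limit density of $T^*$; that gives only the asymptotic constant, whereas your elementary computation yields the exact value for every $n$.

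One small point of hygiene: when citing $\E[Q_n]=\tfrac23(n+1)$, it would be cleaner to note that the identity $\sum_{q=2}^{n} q(q-1) = \tfrac{1}{3} n(n^2-1)$ gives it directly from the marginal $\Prob(Q_n=q)=(q-1)/\tbinom n2$, rather than just referring to ``the marginal of $Q_n$ in Section~\ref{Sec:notation}''; the paper records the marginal but not its mean.
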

\end{theopargself}

\begin{theopargself}
\begin{corollary}[\cite{Wild2013Quicksort}]
\label{Cor:Tstar}
	The normalized number of comparisons $T_n^* \ce T_n \mathbin/ n$ converges to
	a limit $T^*$ in $L_2$: 
	$$
		T_n^* \wwrel{\convL2} T^* \,,
		\qquad \text{with} \qquad 
		T^* \wwrel\eqlaw 
			1 + U_{(2)} \bigl(2-  U_{(1)} - U_{(2)}\bigr) \;.
	$$
\end{corollary}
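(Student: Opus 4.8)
\medskip\noindent\textbf{Proof proposal.}
The plan is to start from the conditional distributional identity of Lemma~\ref{Lm:dist-Tn}, realize all the variables on one probability space, divide everything by $n$, and show that each of the four summands on the right-hand side converges in $L_2$. Since $\|X+Y\|_2\le\|X\|_2+\|Y\|_2$, termwise $L_2$-convergence of the parts yields $L_2$-convergence of the sum, and adding up the four limits should reproduce exactly $T^*=1+U_{(2)}\bigl(2-U_{(1)}-U_{(2)}\bigr)$. The guiding arithmetic is that a hypergeometric variable $\Hyper(n-2,s,w)$ with $s/n\to\alpha$ and $w/n\to\beta$ has conditional mean $\tfrac{sw}{n-2}\sim\alpha\beta\,n$ and conditional variance $O(n)$, so after division by $n$ it concentrates at $\alpha\beta$.

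First I would fix the coupling. Take the $n$ keys to be i.i.d.\ $\uniform(0,1)$ variables with $A[1]$ and $A[n]$ the two pivots, and let $U_{(1)}\le U_{(2)}$ be the order statistics of the two pivot \emph{values}; by \eqref{eq:joint-density-u1u2} these have precisely the claimed joint law, and they are placed on the same space for every $n$. On this space the hypergeometric counts of Lemma~\ref{Lm:dist-Tn} are genuine random variables (they count the ``large'' resp.\ ``medium'' non-pivot keys that fall into prescribed index ranges during one run of partitioning), so the lemma holds as an equality of realized variables, not merely of conditional laws. Conditioning on the pivot values and applying the strong law of large numbers to the $n-2$ non-pivot keys gives $(P_n-1)/(n-2)\almostsure U_{(1)}$ and $(Q_n-2)/(n-2)\almostsure U_{(2)}$; here $(P_n-1)/(n-2)$ and $(Q_n-2)/(n-2)$ are just the empirical fractions of non-pivot keys lying below $U_{(1)}$ and below $U_{(2)}$, respectively. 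Since $0\le P_n/n,\,Q_n/n\le1$, these convergences also hold in $L_2$, by bounded convergence.

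Next I would treat the two hypergeometric terms one at a time. For $H\ce\Hyper(n-2,\,n-P_n-1,\,Q_n-2)$, I split $\tfrac1n H-(1-U_{(1)})U_{(2)}$ into a \emph{fluctuation}, $\tfrac1n\bigl(H-\E[H\mid P_n,Q_n]\bigr)$, and a \emph{drift}, $\tfrac1n\E[H\mid P_n,Q_n]-(1-U_{(1)})U_{(2)}$. The conditional variance of a hypergeometric drawn from $n-2$ balls is at most $n-2$, so the squared $L_2$-norm of the fluctuation is at most $(n-2)/n^2\to0$; the drift equals $\frac{(n-P_n-1)(Q_n-2)}{n(n-2)}$, a bounded function of $(P_n/n,\,Q_n/n)$ converging a.s.\ to $(1-U_{(1)})U_{(2)}$, hence in $L_2$ by bounded convergence. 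The same argument applied to $\Hyper(n-2,\,Q_n-2,\,n-Q_n)$ gives $L_2$-convergence to $U_{(2)}(1-U_{(2)})$. The remaining two summands are immediate: $(n-1)/n\to1$ in $L_2$, and the $\{0,3\}$-valued indicator term of Lemma~\ref{Lm:dist-Tn}, divided by $n$, is bounded in absolute value by $3/n\to0$. Summing, $T_n^*\convL2 1+(1-U_{(1)})U_{(2)}+U_{(2)}(1-U_{(2)})=1+U_{(2)}\bigl(2-U_{(1)}-U_{(2)}\bigr)=T^*$.

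The main obstacle here is less a calculation than a modeling subtlety: Lemma~\ref{Lm:dist-Tn} pins down the law of $T_n$ only \emph{conditionally on} $(P_n,Q_n)$, whereas $L_2$-convergence is a statement about the joint law of $(T_n^*,T^*)$ for each $n$, so one must commit to an explicit coupling and verify that the natural i.i.d.-uniform construction simultaneously (i) realizes the conditional hypergeometric laws of the lemma and (ii) delivers the pivot-rank order statistics $U_{(1)},U_{(2)}$ with density \eqref{eq:joint-density-u1u2}. Once that is in place, the rest is routine second-moment bookkeeping. As a consistency check, $\E[T^*]=1+2\,\E[U_{(2)}]-\E[U_{(1)}U_{(2)}]-\E[U_{(2)}^2]=1+\tfrac43-\tfrac14-\tfrac12=\tfrac{19}{12}$, matching $\E[T_n]/n\to\tfrac{19}{12}$ from Corollary~\ref{Cor:expectation-Tn}.
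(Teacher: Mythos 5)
The paper does not prove this corollary itself; it imports it from \cite{Wild2013Quicksort}. Your proof is correct and is the natural argument one would expect that reference to use: realize everything on one probability space via an i.i.d.\ uniform model, invoke the strong law to get $(P_n-1)/(n-2)\to U_{(1)}$ and $(Q_n-2)/(n-2)\to U_{(2)}$ a.s., decompose each hypergeometric into a drift (whose $n$-normalization converges boundedly to $(1-U_{(1)})U_{(2)}$ resp.\ $U_{(2)}(1-U_{(2)})$) plus a fluctuation (whose conditional variance is $O(n)$, so its $n$-normalization is $o(1)$ in $L_2$), and note the indicator term is $O(1/n)$ deterministically. The arithmetic $(1-U_{(1)})U_{(2)}+U_{(2)}(1-U_{(2)})=U_{(2)}(2-U_{(1)}-U_{(2)})$ and the consistency check $\E[T^*]=\tfrac{19}{12}$ are both right.

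One small wrinkle worth tidying: your coupling fixes the pivots as $A[1]$ and $A[n]$, which is not consistent across $n$ since the key at position $n$ changes as $n$ grows. The cleaner statement is to take an infinite i.i.d.\ uniform sequence and designate the first two entries as the pivots for every $n$, with entries $3,\ldots,n$ as the non-pivot keys; exchangeability guarantees this has the right law for each fixed $n$, and it makes the strong law argument immediate. Also, you do not actually need the hypergeometric counts to equal, pathwise, the realized partitioning counts: it suffices to build \emph{some} coupling on which the conditional law of Lemma~\ref{Lm:dist-Tn} holds and $(P_n,Q_n)$ are driven by $(U_{(1)},U_{(2)})$ — which is exactly what your drift/fluctuation decomposition uses, since the drift is a deterministic function of $(P_n,Q_n)$ and the fluctuation is controlled by the conditional variance alone. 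So that sentence can be weakened without losing anything.
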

\end{theopargself}

\begin{remark}
We re-obtain the leading term coefficient of $\E[T_n]$ as the mean of the limit
distribution of $T_n^*$,
\begin{align*}
		\E[T^*] 
	&\wwrel= 1 + 	\int_0^1 \!\!\! \int_0^y y (2-x-y) 
						f_{(U_{(1)}, U_{(2)})} \!(x,y)\:
					dx \: dy 
	 \wwrel= \tfrac {19} {12} \;.
\end{align*}
The bivariate density $ f_{(U_{(1)}, U_{(2)})} (x,y)$ is given in
equation~\eqref{eq:joint-density-u1u2}.
\end{remark}

\section{Analysis of Yaroslavskiy's Algorithm for Random Rank}
\label{Sec:grand}

Let $\ui{C_n}{r}$ be the \emph{number of comparisons} made by Quickselect
under Yaroslavskiy's algorithm
on a random input of size~$n$ to seek the $r$th order statistic. 
While this variable is easy to analyze for extremal
values $r$ (nearly smallest and nearly largest),
it is harder to analyze for intermediate values of $r$, 
such as when $r = \lfloor 0.17 n \rfloor$. 

For a library implementation of Quickselect, though, it is quite natural to
consider $r$ to be part of the input (so that the only fixed parameter is $n$).
Analyzing $\ui{C_n}{R}$ when $R$ itself is random provides smoothing over
all possible values of $\ui{C_n}r$. 
We let $R=R_n$ be a random variable distributed like
$\uniform[1 \range n]$, i.\,e., every possible rank is requested with the same
probability. 
This rank randomization averages over the easy and hard
cases, which makes the problem of moderate complexity and amenable to analysis. 

In this case we can use the simplified notation $C_n \ce \ui{C_n}{R_n}$.
One seeks a grand average of all averages, a grand variance of all variances 
and a grand (average) distribution of all distributions 
in the specific cases of $r$ as a global measure over all possible
order statistics. This smoothing technique
was introduced in \cite{Mahmoud1995},
and was used successfully in \cite{Lent1996,Prodinger1995}. 
Panholzer and Prodinger give a generating function
formulation for grand averaging \cite{Panholzer1998}.

Right after the first round of partitioning, 
the two pivots (now moved to positions~$P_n$ and $Q_n$) 
split the data array into three subarrays:
$A[1\range P_n-1]$ containing keys with ranks smaller than $P_n$, 
$A[P_n+1\range Q_n-1]$ containing keys with ranks between $P_n$ 
and~$Q_n$ (both inclusively),
and $A[Q_n+1\range n]$
containing keys with ranks that are at least as large as $Q_n$.
Quickselect is then invoked recursively on one of the three
subarrays, depending on the desired order statistic.

As we have pairwise distinct elements almost surely, ranks are in one-to-one
correspondence with key values and the three subarrays contain ranks
\emph{strictly} smaller, between and larger than the pivots.
Therefore, we have the stochastic recurrence
\begin{equation}
		C_n
	\wwrel\eqlaw 
 		  		T_n 
 		\wbin+ C^{\phantom{n}}_{P_n-1}  \one {\{R_n < P_n\}} 
        \wbin+ C'_{Q_n- P_n-1}  \one {\{P_n < R_n < Q_n\}}  
        \wbin+ C^{\pprime}_{n -Q_n}  \one {\{R_n > Q_n\}} ,
\label{Eq:Cn}
\end{equation}
where, for each $i \ge 0$,
$C'_i \eqlaw C^{\pprime}_i \eqlaw C_i$, 
and $(C^{\phantom{n}}_{\smash{P_n}}, C'_{\smash{Q_n-P_n-1}}, C^{\pprime}_{\smash{n - Q_n}})$ 
are conditionally independent (in the sense that, given $P_n=p$, 
and $Q_n = q$,
$C^{\phantom{n}}_{\smash{p-1}}, C'_{\smash{q-p-1}}$,
and $C^{\pprime}_{n-q}$ are independent).

\subsection{Exact Grand Average}
\label{Subsec:grandave}

The distributional equation~\eqref{Eq:Cn} yields a recurrence for the average:
\begin{equation}
	\E[C_n] \wwrel= \E[T_n] + 3\, \E\bigl[C_{P_n-1}  \one {\{R_n < P_n\}}\bigr],
\label{Eq:AveCn}
\end{equation}
where symmetry is used to triple the term containing the first indicator.
By conditioning on $(P_n, Q_n)$ and the independent $R_n$, 
using Corollary~\ref{Cor:expectation-Tn} we get
\begin{align}
		\E[C_n] 
	&\wwrel= 
		\E[T_n] \wbin+ 3 \mkern-10mu \sum_{1\le p < q \le n\;} \sum_{r=1}^n
			\E\bigl[C_{P_n-1}  \one{\{R_n < P_n\}} \given P_n = p, Q_n =q, R_n = r\bigr]    
	\nonumber \\* &\wwrel\ppe\qquad\qquad\qquad{}
        	\times \Prob(P_n = p, Q_n = q, R_n = r)   
\nonumber \\ &\wwrel=
		\tfrac {19} {12} (n+1) - 3 \wwbin+ 3 \sum_{p=1}^n \sum_{r=1}^n
			\E\bigl[C_{p-1} \one {\{r < p\}}\bigr]    
			\times \Prob(P_n = p) \, \Prob (R_n = r)  
\nonumber \\ &\wwrel=
		\tfrac {19} {12} (n+1) - 3 
		\wwbin+ \frac 6 {n^2(n-1)} \sum_{p=1}^n (p-1)(n-p) \, \E[C_{p-1}].
        \label{Eq:ECnrecurrence}
\end{align} 
This recurrence equation can be solved via generating functions.
Let 
$$A(z) \wrel\ce \sum_{n=0}^\infty n \,\E[C_n] \, z^n.$$
be the (ordinary) generating function for $n \, \E[C_n]$.

First, organize the recurrence~\eqref{Eq:ECnrecurrence} in the form
$$
  n^2(n-1)\, \E[C_n] \wwrel=   
                         n^2 (n-1) \bigl(\tfrac{19}{12} (n+1) - 3 \bigr)
                       \wbin+ 6 \sum_{p=1}^n (p-1)(n-p)\, \E[C_{p-1}]. 
$$
Next, multiply both sides of the equation by $z^n$ (for $|z| < 1$), and sum over $n\ge 3$,
the range of validity of the recurrence, to get
$$
	z^2 \sum_{n=3}^\infty \bigl(n^2(n-1) \, \E[C_n]\bigr) z^{n-2}
           \wwrel= 6 \sum_{n=3}^\infty \sum_{k=1}^n (n-k)  
						(k-1)\, \E[C_{k-1}] z^n
                    \wbin+ g(z)
\,,$$
where
\begin{align*}
	g(z) &\wrel= 
		\sum_{n=3}^\infty n^2 (n-1)\bigl(\tfrac {19} {12} (n+1) -3 \bigr)\, z^n 
\\ &
	\wrel=
		\frac {z^3} {(1-z)^5} \bigl( 7z^4 -35z^3 + 70 z^2 -64 z+ 60 \bigr) \;.
\end{align*}
Shifting summation indices and using the boundary conditions $C_0 = C_1 =0$,
and $C_2 = 1$, we extend the series to start at $n = 0$, and get
$$
		z^2\bigl(A^{\pprime}(z) - 2^2(2-1)\cdot 1 z^0\bigr) 
	\wwrel= 
			6 \sum_{n=0}^\infty \sum_{k=0}^{n} (n-k) z^{n-k} 
				\times \bigl(k\, \E[C_k]\bigr) z^k
			\wbin+ g(z).
$$ 
Finally, we get an Euler differential equation
$$z^2 A^{\pprime}(z) \wwrel= 6\frac {z^2} {(1-z)^2} A(z) \bin+ 4 z^2 \bin+ g(z),$$
to be solved under the boundary conditions
$A(0) = 0$, and $A'(0) = 0$.
The solution to this differential equation is
\begin{align*}
 		A(z)
 	&\wwrel=	
 		\frac 1 {300(1-z)^3} \Bigl( 
				2220 z - 510 z^2 + 830 z^3 - 1185z^4 
				+ 699 z^5 - 154 z^6
\\*	&	\qquad\qquad\qquad {} 
			+ 2220 (1-z) \ln(1-z)\Bigr). 
\end{align*}
Extracting coefficients of $z^n$, we find for $n \ge 4$,
\begin{align*}
		\E[C_n] 
	&\wwrel=  
		\tfrac {19} 6 n  - \tfrac {37} 5 H_n  
             +\tfrac {1183}  {100} -\tfrac {37}{5n} H_n 
             - \tfrac {71} {300n}
     \wwrel\sim  
     	\tfrac {19} 6 n, \qquad \mbox{as\ } n \to \infty.
\end{align*}
Proposition~\ref{Prop:grandave} is proved.
\qed

\subsection{Limit Distribution}
\label{Subsec:granddistribution}
Higher moments are harder to compute by direct recurrence
as was done for the mean. For instance, exact variance computation
involves rather complicated dependencies. We need a shortcut
to determine the asymptotic distribution (i.\,e.\ all asymptotic moments),
without resorting to exact calculation of each moment. 
A tool suitable for this task is the \weakemph{contraction method}.

The contraction method was introduced by Rösler~\cite{roesler1991limit} in the
analysis of the Quicksort algorithm, and it soon became a popular method
because of the transparency it provides in the
limit. Rachev and
Rüschendorf added several useful extensions~\cite{Rachev1995} and 
general contraction theorems,
and multivariate extensions are
available~\cite{neininger2001multivariate,Neininger2004,rosler2001analysis}. 
A valuable survey of the method was given by
Rösler~\cite{roesler2001contraction}. 
Neininger gives a variety of applications to random combinatorial structures and
algorithms such as random search trees, random recursive trees, random digital
trees and Mergesort~\cite{Neininger2004}. 
The contraction method has also been used in the context of classic
Quickselect~\cite{Gruebel1996,rosler2004quickselect}. 
Other methods for the analysis of Quickselect have been used;
for example, Grübel uses Markov chains~\cite{Gruebel1998}.

We shall use the contraction method to find the grand distribution
of Quickselect's number of comparisons under rank smoothing.

\smallskip
By dividing~\eqref{Eq:Cn} by $n$ and rewriting the
fractions, we find
\begin{align}
		\frac{C_n} {n} 
	&\wwrel\eqlaw 
		  \frac{C_{P_n-1}}{P_n-1} \cdot \frac {P_n-1} {n} 
		  	\indicator_ {\{R_n < P_n\}}
	\notag\\*&\wwrel\ppe\quad {}	
  		+ \frac {C'_{Q_n-P_n-1}} {Q_n-P_n-1}  \cdot \frac{Q_n-P_n-1} {n}
			\indicator_ {\{P_n < R_n < Q_n\}}
	\notag\\*&\wwrel\ppe\quad {}	
	  	+ \frac{C^{\pprime}_{n-Q_n}}{n - Q_n} \cdot \frac {n-Q_n} n 
			\indicator_ {\{R_n > Q_n\}}
	\notag\\*&\wwrel\ppe\quad {}			 
		\wbin+ \frac{T_n} {n} \;.
\label{eq:Cn/n-recurrence}
\end{align}
This equation is conveniently expressed in terms of the \emph{normalized} random
variables $C_n^* \ce C_n / n$, that is
\begin{align*}
		C_n^* 
	&\wwrel\eqlaw 
	  		C_{P_n-1}^* \frac {P_n-1} {n} \indicator_ {\{R_n < P_n\}}  
	\wbin+ 	\Csp_{Q_n-P_n-1} \frac{Q_n-P_n-1} {n} \indicator_ {\{P_n < R_n <
	Q_n\}} \\*&\wwrel{\phantom{\eqlaw}} \qquad {} 
	\wbin+	\Cspp_{n-Q_n} \frac{n-Q_n} {n} \indicator_ {\{R_n > Q_n\}} 
	\wbin+	T_n^*,
\end{align*}
where for each $j\ge 0$, $\Csp_j \rel\eqlaw \Cspp_j \rel\eqlaw C^*_j$ 
and each of the families $\{C_j^*\}$, $\{\Csp_j\}$, $\{\Cspp_j\}$,
$\{T_j\}$, and $\{R_j\}$ is comprised of totally independent random variables.
This representation suggests a 
limiting functional equation as follows. 
If $C_n^*$ converges to a limit $C^*$, so will $C_{P_n-1}^*$ because $P_n\to
\infty$ almost surely, and it is plausible to guess that the combination 
\smash{$ C_{P_n-1}^*  \tfrac{P_n-1}{n} \indicator_ {\{R_n <  P_n\}}$}
converges in distribution to $ C^* U_{(1)} \indicator_ {\{V < U_{(1)}\}}$. 
Likewise, it is plausible to guess that 
\begin{align*}
		\Csp_{Q_n -P_n-1} \frac {Q_n -P_n-1} {n} \indicator_{\{P_n < R_n < Q_n\}}
	&\wwrel\convD
		\Csp \bigl(U_{(2)}-U_{(1)}\bigr)
	 		\indicator_{\{ U_{(1)} < V <  U_{(2)} \}} \,, 
\intertext{and}	 		
	 	\Cspp_{n-Q_n} \frac {n-Q_n} {n} \indicator_ {\{R_n > Q_n\}} 
	 &\wwrel\convD
	 	\Cspp \bigl(1-U_{(2)}\bigr) 
	 		\indicator_ { \{V >  U_{(2)}\} } \,,
\end{align*}
where $\Csp \rel\eqlaw \Cspp \rel\eqlaw C^*$, and $(C^*, \Csp , \Cspp)$ are
totally independent.

To summarize, if $C_n^*$ converges in distribution to a limiting random variable $C^*$,
one can guess that the limit satisfies the following distributional
equation:
\begin{align}
		C^*  
	&\wwrel\eqlaw 
		  		U_{(1)} \indicator_{\{V < {U_{(1)}}\}} \, C^*
		\wbin+ 	\bigl(U_{(2)} - U_{(1)}\bigr) \indicator_ {\{U_{(1)} < V < U_{(2)}\}}
		\,
				\Csp
	\nonumber \\* &\wwrel{\phantom{\eqlaw}} \qquad {} 
		\wbin+	\bigl(1-U_{(2)}\bigr) \indicator_ {\{V > U_{(2)}\}}  \, \Cspp
		\wwbin+ T^*,
	\label{eq:Cstar-direct-limit-equation-copy}	
\end{align}
with $\bigl(C^*, \Csp, \Cspp, (U_{(1)}, U_{(2)}), V\bigr)$  being totally
independent, and $T^*$ as given in Corollary~\ref{Cor:Tstar}.

The formal proof of convergence is done by coupling the random variables
$C_n$, $P_n$, $Q_n$, and $R_n$ to be defined
on the same probability space, and showing that the \emph{distance} between
the distributions of $C^*_n$ and $C^*$ converges to~0 in some suitable metric
space of distribution functions. 
Here, we use the \weakemph{Zolotarev metric} $\zeta_2$, for which Neininger and
Rüschendorf give convenient contraction theorems \cite{Neininger2004}.
The technical details of using these theorems are provided in
Appendix~\ref{app:proof-convergence-grand-avg}. 
Finally, convergence in $\zeta_2$ implies the claimed convergence in
distribution and in second moments.

\medskip
The representation in~\eqref{eq:Cstar-direct-limit-equation-copy} admits direct
calculation of asymptotic mean and variance.
Taking expectations on both sides and exploiting symmetries gives
\begin{align*}
		\E[C^*] 
	&\wwrel=	\E\bigl[U_{(1)} \indicator_ {\{V < U_{(1)}\}} \bigr] \,\E[C^*] 
		\wbin+ 	\E\bigl[\bigl(U_{(2)} - U_{(1)}\bigr) 
						\indicator_ {\{U_{(1)} < V < U_{(2)}\}}  \bigr]\, \E[\Csp] 
\\*	&\wwrel\ppe\qquad {}  
		\wbin+ \E\bigl[\bigl(1-U_{(1)}\bigr) \indicator_ {\{V > U_{(2)}\}} \bigr]\,
					\E[\Cspp] 
		\wwbin+ \E[T^*]    
\\	&\wwrel= 	3 \, \E\bigl[U_{(1)} \indicator_ {\{V < U_{(1)}\}} \bigr] \,\E[C^*] 
		\wbin+ \tfrac {19} {12} \;.
\end{align*}
So, we compute
\begin{align*}
		\E\bigl[U_{(1)} \indicator_ {\{V < U_{(1)}\}} \bigr]  
	&\wwrel= 	\int_{x=0}^1 \int_{v=0}^1 x \: \indicator_ {\{v < x\}}
						\,f_{U_{(1)}}\!(x) \:f_V (v) \: dv \, dx 
\\ 	&\wwrel=	2\int_{x=0}^1 \int_{v=0}^x x(1-x)\, dv \, dx 
 	 \wwrel= 	\tfrac 1 6 \;.
\end{align*}
It follows that
$$
	\E[C^*] \wwrel= \tfrac {19} 6\,,
	\qquad\text{and, as } n \to\infty\,, \qquad 
	\E[C_n ] \wwrel\sim  \tfrac {19} 6 \. n \;.
$$

Similarly, we can get the asymptotic variance
of $C_n$. We only sketch this calculation.
First, square the distributional
equation~\eqref{eq:Cstar-direct-limit-equation-copy}, then take expectations. 
There will appear ten terms on the right-hand side. 
The three terms involving $(C^*)^2$ are symmetrical, and the three
terms involving cross-products of indicators are 0 (the indicators
are for mutually exclusive events). 
By independence, we have
\begin{align*}
		\E\bigl[(C^*)^2\bigr] 
	&\wwrel= 	3\, \E\bigl[ U_{(1)}^2 \indicator_ {\{V < U_{(1)\}}} \bigr] \,
   					\E\bigl[ (C^*)^2 \bigr] 
\\*	&\wwrel\ppe\qquad {}
		\wbin+	2\, \E\bigl[ T^* \, U_{(1)} 
						\indicator_ {\{V < U_{(1)}\}} \bigr] \, \E[C^*]
\\*	&\wwrel\ppe\qquad {}
    	\wbin+	2\, \E\bigl[ T^*\bigl(U_{(2)} - U_{(1)}\bigr) 
             			\indicator_ {\{U_{(1)} < V < U_{(2)}\}}  \bigr] \, \E[\Csp]
\\*	&\wwrel\ppe\qquad {}
		\wbin+	2\, \E\bigl[ T^*\bigl(1-U_{(2)}\bigr) 
						\indicator_ {\{V>U_{(2)}\}} \bigr]\, 
					\E[\Cspp] 
\\*	&\wwrel\ppe\qquad {}
		\wbin+	\E\bigl[ (T^*)^2 \bigr]. 
\end{align*}
We show the computation for one of these ingredients:
$$
	\E\bigl[U_{(1)}^2 \indicator_ {\{V < U_{(1)\}}} \bigr] 
	\wwrel=  2\int_{y=0}^1\int_{x=0}^y\int_{v=0}^1 
					x^2 \indicator_ {\{v < x\}}  
				\, dv\, dx\, dy 
	\wwrel= \tfrac 1 {10} \;.
$$
After carrying out similar calculations and using Corollary~\ref{Cor:Tstar}, we
obtain
\begin{align*}
		\E\bigl[(C^*)^2\bigr] 
   &\wwrel= 
   		  \tfrac 3 {10} \, \E\bigl[(C^*)^2\big] 
   		+ 2\Bigl(
   			\tfrac {43} {180}\, \E[C^*]
	   		+ \tfrac {53} {180}\, \E[\Csp] 
	   		+ \tfrac {1} {4}\, \, \E[\Cspp]
   		\Bigr)  
   		+ \tfrac {229} {90} \;.
\end{align*}
We can solve for $\E\bigl[(C^*)^2\bigr]$ and 
by inserting $\E[C^*] = \E[\Csp] = \E[\Cspp] = \tfrac{19}6$ 
get
\begin{align*}
		\E\bigl[(C^*)^2\bigr] 
   &\wwrel= \tfrac{10}7 \Bigl( 
   					2\bigl(
   					  \tfrac {43} {180} 
   					+ \tfrac {53} {180} 
       				+ \tfrac {1} {4}\bigr) \tfrac{19}6  
       			+ \tfrac {229} {90}
       		\Bigr) 
    \wwrel= \tfrac{193}{18}	
    \;.
\end{align*}
The variance follows:
\begin{align*}
		\V[C_n] 
	&\wwrel=		\E[C_n^2] - \bigl(\E[C_n]\bigr)^2 	
\\	&\wwrel\sim  	\Bigl(\E[(C^*)^2] - \bigl(\E[C^*]\bigr)^2\Bigr) n^2
\\	&\wwrel=		\bigl(\tfrac{193}{18} - \tfrac{361}{36}\bigr) n^2 
	 \wwrel=			\tfrac{25}{36} n^2
	 \,, \qquad \text{as } n \to \infty \;.
\end{align*}

\medskip
We next present an explicit (unique) solution to the distributional
equation~\eqref{eq:Cstar-direct-limit-equation-copy}. 
A~random variable with this distribution takes the form of a perpetuity.

\begin{lemma}
\label{Lm:equivalantdistribution}
	Every solution $C^*$ of \eqref{eq:Cstar-direct-limit-equation-copy}
	also satisfies the distributional equation \eqref{Eq:Cstar}.
\end{lemma}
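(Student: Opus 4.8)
The plan is to prove the lemma by conditioning, in the distributional identity
\eqref{eq:Cstar-direct-limit-equation-copy}, on which of the three mutually
exclusive and (almost surely) exhaustive events $\{V<U_{(1)}\}$,
$\{U_{(1)}<V<U_{(2)}\}$, $\{V>U_{(2)}\}$ occurs, and then recognizing each of the
resulting three branches as an instance of \eqref{Eq:Cstar} after a
branch-specific affine change of variables. Since $V$ is $\uniform(0,1)$ and
independent of $(U_{(1)},U_{(2)})$, these three events have conditional
probabilities $U_{(1)}$, $U_{(2)}-U_{(1)}$ and $1-U_{(2)}$ given
$(U_{(1)},U_{(2)})$, hence each has unconditional probability $\tfrac13$. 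On each
branch, the right-hand side of \eqref{eq:Cstar-direct-limit-equation-copy}
collapses to a single term $(\text{scale})\cdot(\text{one copy of }C^*)+T^*$, with
$T^*=1+U_{(2)}(2-U_{(1)}-U_{(2)})$ by Corollary~\ref{Cor:Tstar}.

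Next I would introduce, on the $i$-th branch, a reparametrization $(X^*,W^*)$ of
$(U_{(1)},U_{(2)})$: on branch $1$, $(X^*,W^*)\ce(U_{(1)},U_{(2)})$; on branch
$2$, $(X^*,W^*)\ce(U_{(2)}-U_{(1)},\,1-U_{(1)})$; on branch $3$,
$(X^*,W^*)\ce(1-U_{(2)},\,1-U_{(1)})$. Two elementary verifications are required
on each branch: first, the scaling factor equals $X^*$ (it is $U_{(1)}$,
$U_{(2)}-U_{(1)}$, $1-U_{(2)}$ respectively, matching the definition of $X^*$);
second, substituting the inverse relations into $T^*$ turns it into
$1+W^*(2-X^*-W^*)$, $1+(1+X^*-W^*)(2W^*-X^*)$ and $1+(1-X^*)(X^*+W^*)$ on branches
$1$, $2$, $3$ --- one-line polynomial identities --- so that $T^*$ becomes exactly
the corresponding member of the fair mixture~$g$.

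Then I would determine the law of $(X^*,W^*)$ on each branch. Conditionally on
branch $i$, the pair $(U_{(1)},U_{(2)})$ has density proportional to its joint
density $2\cdot\indicator_{\{0<u_1<u_2<1\}}$ times the branch weight
$p_i\in\{u_1,\;u_2-u_1,\;1-u_2\}$; under the substitution above $p_i$ equals $x$,
and each of the three affine maps $(u_1,u_2)\mapsto(x,w)$ has Jacobian determinant
$\pm1$ and carries $\{0<u_1<u_2<1\}$ onto $\{0<x<w<1\}$. Hence the conditional
density of $(X^*,W^*)$ given branch $i$ is $6x\cdot\indicator_{\{0<x<w<1\}}$ for
\emph{every} $i$ (the $6$ arising by normalizing $2x$, which has total mass
$\tfrac13$ over $\{0<x<w<1\}$). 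Since this conditional density is the same on all
three branches and each branch has probability $\tfrac13$, Bayes' rule shows that
$(X^*,W^*)$ has the stated density $f$ and that, conditionally on $(X^*,W^*)$, the
branch index is uniform on $\{1,2,3\}$; i.e.\ $g(X^*,W^*)$ is exactly the
advertised fair mixture of the three functions of $(X^*,W^*)$.

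Finally I would assemble the pieces. Because $C^*,\Csp,\Cspp$ are i.i.d.\ copies
of $C^*$ independent of $(U_{(1)},U_{(2)},V)$, the copy picked out by the (random)
branch is independent of the pair $\bigl((X^*,W^*),\text{branch}\bigr)$ and is
itself distributed like $C^*$; combining this with the two verifications above
gives $C^*\eqlaw X^*C^*+g(X^*,W^*)$ with $(X^*,W^*)\sim f$ independent of the
$C^*$ on the right, which is precisely~\eqref{Eq:Cstar}. All the computations are
elementary; the points that need care are guessing the correct branch-specific
substitutions in branches $2$ and $3$ (branch $1$ being the transparent one) and
the conditional-independence bookkeeping when one of the three copies of $C^*$ is
selected. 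The latter is the main --- though routine --- obstacle.
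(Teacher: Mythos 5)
Your proposal is correct, and it is a genuinely different (and arguably more transparent) route than the paper's, although both hinge on the same three changes of variables. The paper's Appendix~\ref{app:proof-lemma-equivalantdistribution} verifies the lemma analytically: it writes out the characteristic function $\phi_{C^*}(t)$ of a solution of~\eqref{eq:Cstar-direct-limit-equation-copy} by splitting on the three indicator events, applies to the second and third integrals exactly the substitutions you use on branches~2 and~3 (namely $u_1=1-W^*,\,u_2=1+X^*-W^*$ and $u_1=1-W^*,\,u_2=1-X^*$), collapses the three integrals to a single integral against $(6x)\.\phi_{C^*}(tx)$ times the average of the three exponentials, and recognizes the result as the characteristic function of $X^*C^*+g(X^*,W^*)$. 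Your argument instead works directly with the random variables: you condition on the branch, observe that the conditional branch probabilities $U_{(1)}$, $U_{(2)}-U_{(1)}$, $1-U_{(2)}$ each become $X^*$ under the branch-specific reparametrization, compute that the conditional law of $(X^*,W^*)$ given the branch is the same density $6x\.\indicator_{\{0<x<w<1\}}$ in all three cases, and invoke Bayes' rule to conclude that the branch index is independent of $(X^*,W^*)$ and uniform on $\{1,2,3\}$. All of your bookkeeping is sound: the three affine maps are bijections of $\{0<u_1<u_2<1\}$ onto $\{0<x<w<1\}$ with unit Jacobian, the polynomial identities turning $T^*$ into the three mixture terms check out, and the joint independence of the selected copy of $C^*$, the pair $(X^*,W^*)$, and the branch follows because all three copies of $C^*$ are i.i.d.\ and independent of $(U_{(1)},U_{(2)},V)$. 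What your route buys: it makes explicit the probabilistic meaning of the ``fair mixture''\,---\,that the mixing index is uniform and independent of $(X^*,W^*)$ and of the $C^*$ on the right\,---\,a fact which the characteristic-function computation delivers only implicitly. What the paper's route buys: a self-contained, purely analytic verification that requires no discussion of conditional densities or independence structure, just one integral manipulation.
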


The proof is by showing that the characteristic functions for the solutions
of both equations coincide.
Detailed computations are given
Appendix~\ref{app:proof-lemma-equivalantdistribution}. 

\smallskip
The representation in Lemma~\ref{Lm:equivalantdistribution}
allows us
to obtain an expression for~$C^*$ as a sum of products of independent random variables.
Toward this end, let $X_1, X_2, \ldots$ be independent copies of $X^*$, and
let $Y_1, Y_2, \ldots$ be independent copies of $g(X^*, W^*)$,
then
$$
	C^*	
		\wwrel\eqlaw	Y_1 + X_1 C^*  
		\wwrel\eqlaw	Y_1 + X_1 (Y_2 + X_2  C^* ).
$$
Note that because $C^*$ is independent of both $X_1$ and $Y_1$, the $X$ 
and $Y$ introduced in the iteration must be independent copies of $X_1$ and $Y_1$.
Continuing the iterations (always introducing new independent random variables), 
we arrive at
\begin{align}
  C^* 
		&\wwrel\eqlaw 
			Y_1 + X_1 Y_2 + X_1 X_2 (Y_3 + X_3 \, C^*) \nonumber\\
		&\wwrel{\like{\eqlaw}{\vdots}}	\nonumber\\[-1ex]
		&\wwrel\eqlaw 
			\sum_{j=1}^M \Bigl(Y_j \prod_{k=1}^{j-1} X_k \Bigr)
					+ X_1 X_2 \cdots X_M \, C^*,
	\label{Eq:strong}
\end{align}
for any positive integer $M$. However, by the strong law of large numbers,
$$
		\frac 1 M \ln (X_1 X_2 \ldots X_M)
	\wwrel\almostsure 
		\E[\ln X^*] 
	\wwrel= -\tfrac 5 6, 
	\qquad\mbox{as\ } {M\to \infty} \;,
$$
and
$$
	X_1 X_2 \ldots X_M \wwrel\almostsure 0, 
		\qquad\mbox{as\ } {M\to \infty} \;.
$$
Hence, we can proceed with the limit of~\eqref{Eq:strong} and write
$$
		C^*  
	\wwrel\eqlaw 
		\sum_{j=1}^\infty \Bigl(Y_j \prod_{k=1}^{j-1} X_k \Bigr)
	\;.
$$

\section{Analysis of Yaroslavskiy's Algorithm for Extremal Ranks}
\label{Sec:extremal}

The methods used for deriving results for dual-pivot Quickselect
to locate a key of random rank carry over to the case of a relatively 
small or relatively large extremal order statistic.
We shall only sketch the arguments and results for the case $r=1$, as they
closely mimic what has been done for a random rank.

Let $\hat C_n \ce \ui{C_n}{1}$ be the number
of key comparisons required by Quickselect running with Yaroslavskiy's algorithm
to find the smallest element 
in an array $A[1\range n]$
of random data.  
If the smaller of the two pivots (of random rank $P_n$)
in the first round is not the smallest key, the algorithm  
always pursues the leftmost subarray $A[1\range P_n -1]$. 
We thus have the recurrence
\begin{align}
		\hat C_n 
	&\wwrel\eqlaw 
		\hat C_{P_n-1} + T_n
	\;.
	\label{Eq:Cnhat}
\end{align}

\subsection{Exact Mean}
\label{Subsec:extremalave}

Equation~\eqref{Eq:Cnhat} yields a recurrence for the average
$$\E[\hat C_n] \wwrel= \E[T_n] +  \E\bigl[\hat C_{P_n-1} \bigr].$$
Conditioning on $P_n$, we find
\begin{align}
		\E[C_n]	
	&\wwrel=	
		\E[T_n] \bin+ \sum_{p=1}^n 
				\E\bigl[C_{P_n-1} \given P_n = p\bigr]  
				\, \Prob(P_n = p) 
\nonumber \\
	&\wwrel=
			\tfrac {19} {12} (n+1) - 3 
				\wbin+ \frac1{\binom n2} \sum_{p=1}^n (n-p)\, \E[C_{p-1}]. 
	\label{Eq:ECnextremalrecurrence}
\end{align} 
This recurrence equation can be solved via generating functions, 
by steps very similar to what we did in Subsection~\ref{Subsec:grandave},
and we only give an outline of intermediate steps.
If we let 
$$\hat A(z) \wwrel\ce \sum_{n=0}^\infty \, \E[\hat C_n] z^n \,,$$
multiply~\eqref{Eq:ECnextremalrecurrence} by $n(n-1)z^n$ and sum over $n\ge 3$, 
we get an Euler differential equation
$$
		z^2 \bigl(\hat A''(z)-2 \bigr) 
	\wwrel= 
		\frac {2 z^2 \hat A(z)} {(1-z)^2} + h(z)
\,,$$
with
\begin{align*}
	h(z) &\wwrel\ce \sum_{n=3}^\infty \bigl(\tfrac{19}{12}(n+1)-3\bigr) n(n-1) z^n
	\wwrel=
		\frac{z^3}{2(1-z)^4}
			\bigl(
				-7z^3 + 28z^2 - 42z + 40
			\bigr)
		\;.
\end{align*}
This differential equation is
to be solved under the boundary conditions
$\hat A(0) = 0$, and $\hat A'(0) = 0$.
The solution  is
\begin{align*}
		\hat A(z)
	&\wwrel=	
		\frac 1 {24(1-z)^2} \Bigl( 36 z + 42 z^2 -28 z^3 + 7 z^4 
					+ 12  \bigl(3 -6 z^2 + 4z^3 - z^4\bigr) \ln(1-z)\Bigr) \;. 
\end{align*}
Extracting coefficients of $z^n$, we find, for $n \ge 4$,
\begin{align*} 
		\E\bigl[\hat C_n\bigr]
	&\wwrel=	
			\frac 1 {24n(n-1)(n-2)} \Bigl(57n^4 - 48n^3 H_n - 178 n^3 
				+ 144 n^2 H_n 
\\*	&	\qquad\qquad\qquad{} 
				+ 135 n^2 - 96nH_n - 14 n + 24
			\Bigr)
\\	&\wwrel\sim 
		\tfrac {19} 8 n, \qquad \mbox{as\ } n \to \infty \;.
\end{align*}

\subsection{Limit Distribution}
\label{Subsec:extremaldistribution}

By similar arguments as in case of random ranks 
we see that $\hat C^*_n \ce \hat C_n / n$ approaches $\hat
C^*$, a random variable satisfying the distributions equation
\begin{equation}
		\hat C^* 
	\wwrel\eqlaw 
		U_{(1)} \, \hat C^* \bin+ T^* \,,
\label{Eq:Cstarextremal}
\end{equation}
where \smash{$(U_{(1)}, T^*)$} is independent of $\hat C^*$.
We formally establish convergence in law and second moments by showing that
the distance of the distributions of $\hat C_n^*$ and $\hat C^*$ diminishes
to~$0$ in the Zolotarev metric $\zeta_2$.
We go through the technical work in
Appendix~\ref{app:proof-convergence-extremal}, 
using a handy theorem of Neininger and
Rüschendorf \cite{Neininger2004} , which standardized the approach.

\medskip
The representation in equation~\eqref{Eq:Cstarextremal} allows us
to obtain an expression for~$\hat C^*$ by an unwinding process
like that we used for $C^*$; one gets
$$
		\hat C^*  
	\wwrel\eqlaw 
		\sum_{j=1}^\infty \Bigl(Y_j \prod_{k=1}^{j-1} X_k \Bigr)
	\,,
$$
with $\{X_j\}_{j=1}^\infty$ and $\{Y_j\}_{j=1}^\infty$ being two families of
totally independent random variables whose members are all distributed
like~$U_{(1)}$ respectively $T^*$.
This completes a sketch of the proof of Theorem~\ref{Theo:extremal}.
\qed

\section{Conclusion}
\label{Sec:remarks}

In this paper, we discussed the prospect of running Quickselect  
making use of a dual-pivot partitioning strategy by Yaroslavskiy, which recently
provided a speedup for Quicksort and is used today in the library sort of
Oracle's Java~7. 
It has been proven that, for sorting, the total number of
comparisons becomes smaller on average, upon
using Yaroslavskiy's algorithm compared to the classic 
single-pivot variant~\cite{Wild2012}. 
Even if a single partitioning phase may need more comparisons than in the
classic case, the reduced sizes of the subproblems to be processed
recursively\,---\,the input to be sorted is partitioned into three instead of
two parts\,---\,lead to an overall saving.

The speedup in Quicksort by Yaroslavskiy's partitioning algorithm 
raises the hope for similar improvements in Quickselect. 
However, our detailed analysis, presented in this paper and summarized in
Table~\ref{tab:results}, proves the opposite:
When searching for a (uniformly) random rank, we find an expected number of
comparisons of asymptotically $\tfrac{19}6 n = 3.1\overline6 n$ 
for a Quickselect variant running under Yaroslavskiy's algorithm, 
as opposed to $3n$ for classic Quickselect. 
For extremal cases, i.\,e., for ranks close to the minimum or maximum, an
asymptotic average of $\tfrac{19}8 n = 2.375 n$ comparisons is needed,
whereas the classic algorithm only uses $2n$. 
Though not considered here in detail, similar trends are observed for the
number of swaps: 
In expectation, Yaroslavskiy's algorithm,
also needs more swaps than the classic variant.

\begin{table}
	\begin{minipage}{\linewidth}
	\small
	\renewcommand{\thempfootnote}{\alph{mpfootnote}}
	\def\an#1{$#1 n$}
	\def\stdev{std.\,dev.}
	\def\expec{expectation}
	\mbox{}\hfill
	\begin{threeparttable}
	\begin{tabular}{r@{\quad}l@{\;\;}lcc}
		\toprule
			\multicolumn{2}{c}{\multirow2*{\bfseries Cost Measure}}
			&& \bfseries Quickselect with & \bfseries Classic Quickselect \\
			&& \;\;error
			 & \bfseries Yaroslavskiy's Algorithm & \bfseries with Hoare's Algorithm
			  \\
		\midrule
		&&&\multicolumn2c{\textsl{when selecting a uniformly chosen order statistic}}
		\\
		\cmidrule{4-5}
		\multirow{2}*{Comparisons}
			& \expec & $\Oh(\log n)$ 	
				& \an{3.1\overline6}
				& \an3 \tnotex{tn:mahmoud95} \\
			& \stdev & $o(n)$	
				& \an{0.8\overline3}
				& \an1 \tnotex{tn:mahmoud95}  \\[1ex]
		\multirow{1}*{Swaps}
			& expectation & $\Oh(\log n)$
				& \an1 \tnotex{tn:yaros-swaps}
				& \an{0.5} \tnotex{tn:classic-swaps}\\[1ex]
		\midrule
		&&&\multicolumn2c{\textsl{when selecting an extremal order statistic}} \\
		\cmidrule{4-5}
		\multirow{3}*{Comparisons}
			& \expec & $\Oh(\log n)$ 	
				& \an{2.375}
				& \an2 \tnotex{tn:mahmoud95} \\[.25ex]
			& \multirow2*{\stdev} & \multirow2*{$o(n)$}	
				& \llap{small: }\an{0.512551}
				& \multirow2*{\an{0.707107} \tnotex{tn:mahmoud95}}  \\
			&&& \llap{large: }\an{0.598087}	  \\[1.5ex]
		\multirow{1}*{Swaps}
			& expectation & $\Oh(\log n)$
				& \an{0.75} \tnotex{tn:yaros-swaps}
				& \an{0.\overline3} \tnotex{tn:classic-swaps}\\[.25ex]
		\bottomrule
	\end{tabular}
	\begin{footnotesize}
	\begin{tablenotes}
		\item[\dag] \label{tn:mahmoud95} 
			see \cite{Mahmoud1995}; Theorems~1 and~2.
		\item[\ddag] \label{tn:yaros-swaps} 
			by the linear relation of expected swaps and comparisons, we
			can insert the toll function for swaps from~\cite{Wild2013Quicksort}.
		\item[*] \label{tn:classic-swaps} 
			see \cite{Hwang2000}; integrating over $\alpha\in[0,1]$
			resp.\ taking $\alpha\to0$ in eq~(12).
	\end{tablenotes}
	\end{footnotesize}
	\end{threeparttable}
	\hfill\mbox{}
	\end{minipage}
	\caption{
		Main results of this paper and the corresponding results for classic
		Quickselect from the literature.
	}
	\label{tab:results}
\end{table}

The observed inferiority of dual-pivoting in Quickselect goes beyond the case of 
Yaroslavskiy's algorithm. 
Aum\"uller and Dietzfelbinger~show that \emph{any} dual-pivoting method must use at
least~$\tfrac32 n$ comparisons on average for a single partitioning step
\cite[Theorem~1]{Aumuller2013}.
Even with such an optimal method, Quickselect needs $3n + o(n)$ comparisons on
average to select a random order statistic and $2.25n + o(n)$ comparisons when
searching for extremal elements; 
no improvement over classic Quickselect in both measures.

\smallskip
Our analysis provides deeper insight than just the average case\,---\,we derived
variances and fixed-point equations for the distribution of the number of
comparisons.
Even though of less practical interest than the average considerations, it is
worth noting that the variance of the number of comparisons made by
Quickselect under Yaroslavskiy's algorithm is significantly smaller than for
the classic one, making actual costs more predictable.

\smallskip
We can give some intuition based on our analysis on why dual pivoting does not
improve Quickselect. 
As already pointed out, the new algorithm may need more
comparisons for a single partitioning round
than the classic strategy, but leads to smaller subproblems to be handled
recursively. 
In classic Quickselect, using pivot~$p$ and searching for a random rank $r$, we
use $n$ comparisons to exclude from recursive calls either $n-p$ elements, if
$r<p$, or $p$ elements, if~$r>p$, or all $n$ elements, if $r=p$. 
Averaging over all $p$ and $r$, this implies that a single comparison helps to
exclude $\frac{1}{3} + o(1)$ elements from further computations, as $n\to\infty$. 
When interpreting our findings accordingly, Quickselect under Yaroslavskiy's
algorithm excludes asymptotically only $\frac{6}{19} \approx 0.3125$
elements per comparison on average. 
We have to conclude that the reduction in subproblem sizes is not sufficient
to compensate for the higher partitioning costs.

Nevertheless, the attempts to improve the Quickselect algorithm are not a
complete failure. 
Preliminary experimental studies give some hope for a faster algorithm in
connection with cleverly chosen pivots (comparable to the median-of-three strategy
well-known in the classic context), especially for presorted data.
Future research may focus on this scenario, trying to identify an optimal
choice for the pivots.
Related results are known for classic
Quickselect~\cite{Martinez2010,Martinez2001} and Yaroslavskiy's algorithm in
Quicksorting~\cite{Wild2013Alenex}.

Furthermore, it would be interesting to extend our analysis to the number of
bit comparisons instead of atomic key comparisons. 
This is especially of interest in connection with nonprimitive 
data types like
strings. 
However, in this context one typically has to deal with much more complicated
analysis for the resulting subproblems no longer preserve randomness
in the subarrays
(see~\cite{Fill2009} for corresponding results for classic Quickselect).
As a consequence, the methods used in this paper are no longer applicable.
 
On the grounds of current knowledge, though\,---\,i.\,e., the comparison model that
we studied in this paper\,---\,we can recommend to practitioners 
the use of Yaroslavskiy's partitioning in Quicksort, but not in Quickselect.

\begin{small}
\bibliography{quicksort-refs}   

\begin{thebibliography}{10}

\bibitem{Iksanov2009}
Gerold Alsmeyer, Alex Iksanov, and Uwe Rösler.
\newblock On distributional properties of perpetuities.
\newblock {\em Journal of Theoretical Probability}, 22:666--682, 2009.

\bibitem{Aumuller2013}
Martin Aum\"{u}ller and Martin Dietzfelbinger.
\newblock {Optimal Partitioning for Dual Pivot Quicksort}.
\newblock March 2013.

\bibitem{Bentley1984}
Jon Bentley.
\newblock {Programming pearls: how to sort}.
\newblock {\em Communications of the ACM}, 27(4):287--291, April 1984.

\bibitem{Chung2001}
Kai~Lai Chung.
\newblock {\em {A Course in Probability Theory}}.
\newblock Academic Press, 3rd edition, 2001.

\bibitem{David2003}
Herbert~A. David and Haikady~N. Nagaraja.
\newblock {\em {Order Statistics (Wiley Series in Probability and
  Statistics)}}.
\newblock Wiley-Interscience, 3rd edition, 2003.

\bibitem{Embrechts1997}
Paul Embrechts, Claudia Kl\"{u}ppelberg, and Thomas Mikosch.
\newblock {\em {Modelling Extremal Events}}.
\newblock Springer Verlag, Berlin, Heidelberg, 1997.

\bibitem{Fill2013}
James~Allen Fill.
\newblock {Distributional convergence for the number of symbol comparisons used
  by QuickSort}.
\newblock {\em The Annals of Applied Probability}, 23(3):1129--1147, June 2013.

\bibitem{Fill2009}
James~Allen Fill and Tak\'{e}hiko Nakama.
\newblock {Analysis of the Expected Number of Bit Comparisons Required by
  Quickselect}.
\newblock {\em Algorithmica}, 58(3):730--769, March 2009.

\bibitem{Gruebel1998}
Rudolf Gr\"{u}bel.
\newblock {Hoare's Selection Algorithm: A Markov Chain Approach}.
\newblock {\em Journal of Applied Probability}, 35(1):36--45, 1998.

\bibitem{Gruebel1996}
Rudolf Gr\"{u}bel and Uwe R\"{o}sler.
\newblock {Asymptotic Distribution Theory for Hoare's Selection Algorithm}.
\newblock {\em Advances in Applied Probability}, 28(1):252--269, 1996.

\bibitem{hennequin1989combinatorial}
Pascal Hennequin.
\newblock {Combinatorial analysis of Quicksort algorithm}.
\newblock {\em Informatique th\'{e}orique et applications}, 23(3):317--333,
  1989.

\bibitem{hennequin1991analyse}
Pascal Hennequin.
\newblock {\em {Analyse en moyenne d’algorithmes : tri rapide et arbres de
  recherche}}.
\newblock {PhD Thesis}, Ecole Politechnique, Palaiseau, 1991.

\bibitem{Hoare1961}
C.~A.~R. Hoare.
\newblock {Algorithm 65: Find}.
\newblock {\em Communications of the ACM}, 4(7):321--322, July 1961.

\bibitem{Hoare1962}
C.~A.~R. Hoare.
\newblock {Quicksort}.
\newblock {\em The Computer Journal}, 5(1):10--16, January 1962.

\bibitem{Hwang2000}
Hsien-Kuei Hwang and Tsung-Hsi Tsai.
\newblock {Quickselect and Dickman function}.
\newblock {\em Combinatorics, Probability and Computing}, 11, 2000.

\bibitem{Kirschenhofer1998}
Peter Kirschenhofer and Helmut Prodinger.
\newblock {Comparisons in Hoare's Find Algorithm}.
\newblock {\em Combinatorics, Probability and Computing}, 7(01):111--120, 1998.

\bibitem{Knuth1998}
Donald~E. Knuth.
\newblock {\em {The Art Of Computer Programming: Searching and Sorting}}.
\newblock Addison Wesley, 2nd edition, 1998.

\bibitem{Lent1996}
Janice Lent and Hosam~M. Mahmoud.
\newblock {Average-case analysis of multiple Quickselect: An algorithm for
  finding order statistics}.
\newblock {\em Statistics \& Probability Letters}, 28(4):299--310, 1996.

\bibitem{Mahmoud2010}
Hosam~M. Mahmoud.
\newblock {Distributional analysis of swaps in Quick Select}.
\newblock {\em Theoretical Computer Science}, 411:1763--1769, 2010.

\bibitem{Mahmoud1995}
Hosam~M. Mahmoud, Reza Modarres, and Robert~T. Smythe.
\newblock {Analysis of quickselect : an algorithm for order statistics}.
\newblock {\em Informatique th\'{e}orique et applications}, 29(4):255--276,
  1995.

\bibitem{MahmoudPittel1988}
Hosam~M. Mahmoud and Boris Pittel.
\newblock On the joint distribution of the insertion path length and the number
  of comparisons in search trees.
\newblock {\em Discrete Applied Mathematics}, 20(3):243--251, July 1988.

\bibitem{Martinez2010}
Conrado Mart\'{\i}nez, Daniel Panario, and Alfredo Viola.
\newblock {Adaptive sampling strategies for quickselects}.
\newblock {\em ACM Transactions on Algorithms}, 6(3):1--45, June 2010.

\bibitem{Martinez2009}
Conrado Mart\'{\i}nez and Helmut Prodinger.
\newblock {Moves and displacements of particular elements in Quicksort}.
\newblock {\em Theoretical Computer Science}, 410(21–23):2279--2284, 2009.

\bibitem{Martinez2001}
Conrado Mart\'{\i}nez and Salvador Roura.
\newblock {Optimal Sampling Strategies in Quicksort and Quickselect}.
\newblock {\em SIAM Journal on Computing}, 31(3):683, 2001.

\bibitem{neininger2001multivariate}
Ralph Neininger.
\newblock {On a multivariate contraction method for random recursive structures
  with applications to Quicksort}.
\newblock {\em Random Structures \& Algorithms}, 19(3-4):498--524, 2001.

\bibitem{Neininger2004}
Ralph Neininger and Ludger R\"{u}schendorf.
\newblock {A General Limit Theorem for Recursive Algorithms and Combinatorial
  Structures}.
\newblock {\em The Annals of Applied Probability}, 14(1):378--418, 2004.

\bibitem{Panholzer1998}
Alois Panholzer and Helmut Prodinger.
\newblock {A generating functions approach for the analysis of grand averages
  for multiple QUICKSELECT}.
\newblock {\em Random Structures \& Algorithms}, 13(3-4):189--209, 1998.

\bibitem{Prodinger1995}
Helmut Prodinger.
\newblock {Multiple Quickselect---Hoare's Find algorithm for several elements}.
\newblock {\em Information Processing Letters}, 56(3):123--129, 1995.

\bibitem{Rachev1995}
Svetlozar~T. Rachev and Ludger R\"{u}schendorf.
\newblock {Probability Metrics and Recursive Algorithms}.
\newblock {\em Advances in Applied Probability}, 27(3):770--799, 1995.

\bibitem{roesler1991limit}
Uwe R\"osler.
\newblock A limit theorem for ``quicksort''.
\newblock {\em Informatique th{\'e}orique et applications}, 25(1):85--100,
  1991.

\bibitem{rosler2001analysis}
Uwe R{\"o}sler.
\newblock On the analysis of stochastic divide and conquer algorithms.
\newblock {\em Algorithmica}, 29(1):238--261, 2001.

\bibitem{rosler2004quickselect}
Uwe R\"{o}sler.
\newblock {QUICKSELECT revisited}.
\newblock {\em Journal of the Iranian Statistical Institute}, 3:271--296, 2004.

\bibitem{roesler2001contraction}
Uwe R\"{o}sler and Ludger R\"{u}schendorf.
\newblock {The contraction method for recursive algorithms}.
\newblock {\em Algorithmica}, 29(1):3--33, 2001.

\bibitem{Sedgewick1975}
Robert Sedgewick.
\newblock {\em {Quicksort}}.
\newblock {PhD Thesis}, Stanford University, 1975.

\bibitem{Sedgewick1977}
Robert Sedgewick.
\newblock {The analysis of Quicksort programs}.
\newblock {\em Acta Informatica}, 7(4):327--355, 1977.

\bibitem{wild2012thesis}
Sebastian Wild.
\newblock {Java 7's Dual Pivot Quicksort}.
\newblock Master thesis, University of Kai\-sers\-lau\-tern, 2012.

\bibitem{Wild2012}
Sebastian Wild and Markus~E. Nebel.
\newblock {Average Case Analysis of Java 7's Dual Pivot Quicksort}.
\newblock In Leah Epstein and Paolo Ferragina, editors, {\em ESA 2012}, volume
  7501 of {\em LNCS}, pages 825--836. Springer, 2012.

\bibitem{Wild2013Quicksort}
Sebastian Wild, Markus~E. Nebel, and Ralph Neininger.
\newblock {Average Case and Distributional Analysis of Java 7's Dual Pivot
  Quicksort}.
\newblock {\em ACM Transactions on Algorithms}, accepted for publication.

\bibitem{Wild2013Alenex}
Sebastian Wild, Markus~E. Nebel, Raphael Reitzig, and Ulrich Laube.
\newblock {Engineering Java 7's Dual Pivot Quicksort Using MaLiJAn}.
\newblock In Peter Sanders and Norbert Zeh, editors, {\em ALENEX 2013}, pages
  55--69. SIAM, 2013.

\end{thebibliography}
\end{small}

\clearpage
\appendix
\section*{{\LARGE Appendix}}
\small

\numberwithin{theorem}{section}

\section{Spacings and Subproblem Sizes}
\label{app:definition-spacings}

The only data aspect that plays any role in a comparison-based
sorting or selection method is the \emph{relative ranking}. For example,
comparing  80 to 60 is the same as comparing 2 to 1 (and the same
action is taken in both cases, like a swapping, for instance). All
probability models that give a random permutation of ranks almost
surely are therefore equivalent from the point of view of  
a comparison-based sorting method. 
For example, sorting data from \emph{any} continuous distribution gives rise to
the same distribution of complexity measures, such as the number of comparisons
or swaps. 
We might as well work through the analysis using one convenient continuous data
model, like the $\uniform(0,1)$. 
In this appendix, we assume our data to come from such a uniform density,
as was done in~\cite{MahmoudPittel1988,Wild2013Quicksort}, which gives a
convenient definition of random permutations of increasing sample sizes (and
infinite random permutations, as well).

We use a notation that symmetrizes the reference to the three subarrays 
(see~\cite{Wild2013Quicksort}). Instead of referring to the sizes
of the three subarrays by $P_n-1$, $Q_n-P_n-1$, and $n-Q_n$,
we simply call them $\ui{I_j}n$, for $j=1,2,3$.
Moreover, $P_n$ and $Q_n$ are an inconvenient basis for the transition to the
limit as they take values from the discrete set $\{1,\ldots,n\}$; 
hence we use an alternative description of the distribution of
$\ui{\vect I}n = (\ui{I_1}n,\ui{I_2}n,\ui{I_3}n)$:

Denote by $\vect S = (S_1,S_2,S_3)$ 
the \emph{spacings} induced by the two independent random variables $U_1$ and
$U_2$ distributed uniformly in $(0,1)$; formally we have
\begin{align*}
	(S_1,S_2,S_3)  &\wwrel=
	( U_{(1)} ,\, U_{(2)}-U_{(1)} ,\, 1-U_{(2)} )\,, 
\end{align*}
for
\begin{align*}
	 U_{(1)} \ce \min\{U_1,U_2\},
	 	 \qquad  \text{ and } \qquad U_{(2)} \ce \max\{U_1,U_2\}
	 	\;.
\end{align*}
It is well-known that $\vect S$ is uniformly
distributed in the standard $2$-simplex \cite[p.\,133f]{David2003},
i.\,e.\ $(S_1,S_2)$ has density
\begin{align*}
	f_S(x_1,x_2) &\wwrel=
		\begin{cases}
			2, & \text{for } x_1,x_2 \ge 0 \rel\wedge x_1+x_2 \le 1, \\
			0, & \text{otherwise}\;.
		\end{cases}
\end{align*}
$S_3 = 1-S_1-S_2$ is fully determined by $(S_1,S_2)$.

We can express the distribution of the sizes of the three
subarrays 
$\ui{\vect I}{n} = (\ui{I_1}{n},\ui{I_2}{n},\ui{I_3}{n})$
generated in the first partitioning round based on $\vect S$: 
We have 
$\ui{I_1}{n} + \ui{I_2}{n} + \ui{I_3}{n} = n-2$, and conditional on $\vect S$,
the vector $\ui{\vect I}{n}$ has a multinomial distribution:
\begin{align*}
	\ui{\vect I} n &\wwrel\eqdist \multinomial(n-2;\,S_1,S_2,S_3).
\end{align*}

\begin{lemma}
\label{lem:asymptotic-I}
	We have	for $j\in\{1,2,3\}$ the convergence, as $n\to\infty$,
	\begin{align*}
		\frac{\ui{I_j}n}{n}
			&\wwrel{\convL 2}  S_j
		\;.
	\end{align*}
\end{lemma}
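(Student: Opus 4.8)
The plan is to prove $L_2$-convergence directly from the conditional multinomial structure $\ui{\vect I}{n}\eqdist\multinomial(n-2;S_1,S_2,S_3)$, by conditioning on $\vect S$ and controlling the resulting expectations. Write $\|\ui{I_j}n/n - S_j\|_2^2 = \E\bigl[(\ui{I_j}n/n - S_j)^2\bigr]$ and split the error using the triangle inequality (or, cleaner, expand the square and use the tower property) into a part measuring the deviation of $\ui{I_j}n$ from its conditional mean $(n-2)S_j$ and a part measuring the $\Oh(1/n)$ gap between $(n-2)S_j/n$ and $S_j$. Concretely, I would write
\begin{align*}
	\tfrac{\ui{I_j}n}{n} - S_j
	&\wwrel=
		\tfrac1n\bigl(\ui{I_j}n - (n-2)S_j\bigr) \bin- \tfrac2n S_j,
\end{align*}
so that $\bigl\|\ui{I_j}n/n - S_j\bigr\|_2 \le \tfrac1n\bigl\|\ui{I_j}n - (n-2)S_j\bigr\|_2 + \tfrac2n\|S_j\|_2$, and the second summand is trivially $\Oh(1/n)$ since $0\le S_j\le 1$.

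For the first summand, condition on $\vect S$: given $\vect S = \vect s$, the marginal $\ui{I_j}n$ is $\mathrm{Binomial}(n-2, s_j)$, hence $\E\bigl[(\ui{I_j}n - (n-2)s_j)^2 \given \vect S = \vect s\bigr] = (n-2)s_j(1-s_j) \le (n-2)/4$. Taking expectation over $\vect S$ gives $\bigl\|\ui{I_j}n - (n-2)S_j\bigr\|_2^2 \le (n-2)/4$, so $\tfrac1n\bigl\|\ui{I_j}n - (n-2)S_j\bigr\|_2 = \Oh(n^{-1/2})$. Combining the two bounds yields $\bigl\|\ui{I_j}n/n - S_j\bigr\|_2 = \Oh(n^{-1/2}) \to 0$, which is exactly $\ui{I_j}n/n \convL2 S_j$. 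One should remark that the same probability space is in force: $\vect S$ is built from the fixed $U_1,U_2$ on $([0,1],\borel_{[0,1]},\lambda)$ as in Appendix~\ref{app:definition-spacings}, and $\ui{\vect I}n$ is coupled to it through the multinomial sampling, so the $L_2$ statement is meaningful as written.

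There is essentially no serious obstacle here; the only point requiring a modicum of care is bookkeeping the ``$n-2$ versus $n$'' discrepancy and making sure the conditional second moment bound is applied under the correct conditioning before averaging over $\vect S$ — the boundedness $S_j\in[0,1]$ makes every error term uniformly controlled, so dominated convergence is not even needed. I would also note in passing that the argument gives the slightly stronger quantitative rate $\bigl\|\ui{I_j}n/n - S_j\bigr\|_2 = \Oh(n^{-1/2})$, and that convergence in $L_2$ implies convergence in distribution and in first and second moments, which is all that the later contraction-method arguments invoke this lemma for.
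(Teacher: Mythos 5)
Your proof is correct, but it takes a genuinely different route from the paper's. The paper argues qualitatively: by the strong law of large numbers (applied conditionally on $\vect S$, using the coupling to infinite random permutations set up in Appendix~\ref{app:definition-spacings}), $\ui{I_j}n/n \to S_j$ almost surely, and since $|\ui{I_j}n/n|\le1$ uniformly, bounded convergence upgrades this to $L_2$. You instead compute the $L_2$ distance directly from the conditional multinomial structure: conditioning on $\vect S$ gives $\ui{I_j}n$ a Binomial$(n-2,S_j)$ distribution, whose conditional variance is at most $(n-2)/4$, and after decomposing the error as $\tfrac1n(\ui{I_j}n-(n-2)S_j) - \tfrac2n S_j$ and applying Minkowski, you get the bound $\|\ui{I_j}n/n - S_j\|_2 = \Oh(n^{-1/2})$. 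What your approach buys is a quantitative convergence rate and independence from the almost-sure coupling across sample sizes (you only use the conditional distribution at a single $n$, which is all that is needed for $L_2$ convergence along the marginals); what the paper's approach buys is brevity, at the cost of implicitly invoking the coupled construction of $\ui{\vect I}n$ for increasing $n$. The only slightly loose phrasing in your write-up is calling the Minkowski step a ``triangle inequality'' after having suggested ``expand the square and use the tower property'' as an alternative — you ultimately use Minkowski, which is clean and adequate; the decomposition and the conditional variance bound are both exactly right.
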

\begin{proof}
By the \weakemph{strong law of large numbers}, $\ui{I_j}n / n$ converges to
$S_j$ almost surely.
Moreover, $|\ui{I_j}n / n|$ is bounded by $1$, and the statement follows.
\end{proof}

\clearpage
\section{Proof of Convergence: Rank Smoothing}
\label{app:proof-convergence-grand-avg}

We are going to apply the following theorem by Neininger and
Rüschendorf~\cite{Neininger2004}, which we restate for the reader's
convenience:

\begin{theopargself}
\begin{theorem}[{\cite[Theorem~4.1]{Neininger2004}}]
	\label{thm:neininger-thm-4.1}
	Let $(X_n)_{n\in\N}$ be a sequence of $s$-integrable, $0<s\le 3$, random
	variables satisfying the recurrence
	\begin{align*}
			X_n
		&\wwrel\eqlaw
			\sum_{r=1}^K \ui{A_r}n \ui{X_{\ui{I_r}n}}r \wrel+ \ui bn \,,
			\qquad n \ge n_1 \,,
	\end{align*}
	where $K\in\N$ is a constant, 
	$(\ui{\vect A\!}n, \ui bn, \ui{\vect I}n)$
	and $(\ui{X_n}1)_{n\in\N}$, \dots, $(\ui{X_n}K)_{n\in\N}$ are independent and
	$\ui{X_i}r$ is distributed like $X_i$ for all $r=1,\ldots,K$ and $i\ge0$.
	Assume further for all $n$: 
	(a)~if $0<s\le1$ that $X_n$ has finite variance;
	(b)~if $1<s\le2$ additionally that $\E[X_n] = 0$ and 
	(c)~if $2<s\le 3$ additionally that $\V[X_n]=1$.
	Moreover, assume the following conditions:
	\begin{enumerate}[itemsep=0ex,leftmargin=3em,label=(\Alph*)]
		\item \rule{0pt}{10pt} \label{cond:convergence-of-coeffs}
			\smash{$(\ui{A_1}n,\ldots,\ui{A_K}n,\ui bn) 
				\wrel{\convL s} 
				(A_1^*,\ldots,A_K^*,b^*)$};
		\item \rule{0pt}{10pt} \label{cond:contraction-of-coeffs}
			\smash{$\E \bigl[
					\sum_{r=1}^K | A_r^* |^s
				\bigr] < 1$};
		\item \rule{0pt}{10pt} \label{cond:non-degeneracy-of-coeffs}
			For all $c\in\N$ and $r=1,\ldots,K$ holds 
				\smash{$\E\bigr[
					\indicator_{\{\ui{I_r}n\le c \rel\vee \ui{I_r}n = n\}}
					| \ui{A_r}n |^s
				\bigr] \wrel\to 0$}\,.
	\end{enumerate}
	Then \smash{$\lim\limits_{n\to\infty} \zeta_s(X_n,X) = 0$} for a random
	variable $X$ whose distribution is given as the unique fixed point of
	\begin{align*}
		X &\wwrel\eqlaw \sum_{r=1}^K A_r^* \ui Xr \rel+ b^* 
	\end{align*}
	among all distributions with 
	(a)~finite $s$th moments, if $0<s\le1$,
	(b)~finite $s$th moments and mean zero, if $1<s\le2$, and 
	(c)~finite $s$th moments, variance one and mean zero, if $2<s\le 3$.
	Here, $(A_1^*,\ldots,A_K^*,b^*)$ and $\ui X1$, \dots, $\ui XK$ are independent
	and $\ui Xr$ is distributed like $X$ for $r=1,\ldots,K$.
	\qed
\end{theorem}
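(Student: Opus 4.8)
The plan is to run the standard contraction‑method argument in the Zolotarev metric; indeed, as this is a verbatim restatement of \cite[Theorem~4.1]{Neininger2004}, nothing new is needed, and I only indicate the structure. First I would fix the ambient space: let $\mathcal M_s$ be the set of probability laws on $\R$ with finite $s$th absolute moment, restricted according to (a)/(b)/(c) — mean zero when $1<s\le2$, and additionally unit variance when $2<s\le3$ — metrized by $\zeta_s$. The two properties of $\zeta_s$ that do all the work are its $(s,+)$‑ideality, namely $\zeta_s(X+Z,Y+Z)\le\zeta_s(X,Y)$ for $Z$ independent of $(X,Y)$, together with the $s$‑homogeneity $\zeta_s(cX,cY)=|c|^s\zeta_s(X,Y)$, and the known (but slightly delicate) fact that $\zeta_s$ is finite and makes $\mathcal M_s$ a complete metric space under exactly these normalizations.

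Next I would introduce the limit operator $T$ sending a law $\mu\in\mathcal M_s$ to the law of $\sum_{r=1}^K A_r^* Z^{(r)}+b^*$, where $Z^{(1)},\dots,Z^{(K)}$ are i.i.d.\ $\sim\mu$ and independent of $(A_1^*,\dots,A_K^*,b^*)$; assumptions (a)/(b)/(c) together with condition~\ref{cond:convergence-of-coeffs} guarantee $T(\mathcal M_s)\subseteq\mathcal M_s$. Conditioning on the coefficient vector and applying ideality termwise gives $\zeta_s(T\mu,T\nu)\le\E\!\left[\sum_{r=1}^K|A_r^*|^s\right]\zeta_s(\mu,\nu)$, so condition~\ref{cond:contraction-of-coeffs} makes $T$ a strict contraction and Banach's fixed‑point theorem produces the unique fixed point $X$ named in the statement.

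The substantive part is to show $\ell_n\ce\zeta_s(X_n,X)\to0$. I would first establish $\sup_n\ell_n<\infty$ by comparing $X_n$ with an accompanying sequence and a uniform bound on $s$th moments. Then, starting from the recurrence for $X_n$ and the fixed‑point equation for $X$, I would insert the intermediate law of $\sum_r\ui{A_r}n X^{(r)}+\ui bn$ and split with the triangle inequality: its $\zeta_s$‑distance to $\mathcal L(X)$ tends to $0$ by condition~\ref{cond:convergence-of-coeffs} (the $L_s$‑convergence of the coefficients, combined with the moment normalizations, forces $\zeta_s$‑convergence of the corresponding linear combinations), while for the remaining piece I would condition on $(\ui{\vect A}n,\ui{\vect I}n)$ and use ideality to bound it by $\E\!\left[\sum_r|\ui{A_r}n|^s\,\ell_{\ui{I_r}n}\right]$. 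A $\limsup$ argument closes the loop: with $\eta=\limsup_n\ell_n$, bound $\ell_{\ui{I_r}n}$ by $\eta+\varepsilon$ on the event that $\ui{I_r}n$ is large and by crude bounds on the events $\{\ui{I_r}n\le c\}$ and $\{\ui{I_r}n=n\}$ — the latter two contributions vanishing precisely by condition~\ref{cond:non-degeneracy-of-coeffs} — to get $\eta\le\E\!\left[\sum_r|A_r^*|^s\right]\eta$, hence $\eta=0$. Finally, $\zeta_s$‑convergence yields weak convergence plus convergence of moments up to order $\lfloor s\rfloor$ (apply the metric to the polynomial test functions $x\mapsto x^j$), which is the asserted conclusion.

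The main obstacle is exactly the bookkeeping around the random indices $\ui{I_r}n$: obtaining the uniform bound $\sup_n\ell_n<\infty$, and then isolating the ``degenerate'' regimes — $\ui{I_r}n$ bounded and $\ui{I_r}n=n$ — which is what condition~\ref{cond:non-degeneracy-of-coeffs} is tailored to kill, so that the $\limsup$ recursion genuinely contracts. Everything else (ideality of $\zeta_s$, completeness of $\mathcal M_s$, Banach) is routine once the space is chosen correctly.
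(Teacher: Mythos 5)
The paper does not prove this theorem; it is a verbatim restatement of Theorem~4.1 in \cite{Neininger2004}, cited for the reader's convenience and terminated with an immediate \qed. Your proposal correctly recognizes this and supplies a structural sketch of the argument from the cited source: $(s,+)$-ideality and $s$-homogeneity of $\zeta_s$ give contraction of the map $\mu\mapsto\mathcal L\bigl(\sum_r A_r^* Z^{(r)}+b^*\bigr)$ on the moment-normalized space, Banach's theorem gives the unique fixed point, and a $\limsup$ recursion for $\ell_n=\zeta_s(X_n,X)$ — with condition~(C) killing the degenerate regimes $\ui{I_r}n\le c$ and $\ui{I_r}n=n$ — closes the loop. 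That matches the organization of the Neininger--R\"uschendorf proof, and you rightly flag the genuinely delicate points: finiteness and completeness of $\zeta_s$ on the normalized space, the uniform bound $\sup_n\ell_n<\infty$, and the bookkeeping around the random subproblem sizes. The one step you gloss over slightly is that deducing $\zeta_s$-convergence of $\sum_r\ui{A_r}n X^{(r)}+\ui bn$ to the limit law from $L_s$-convergence of the coefficient vector needs the normalization assumptions (a)/(b)/(c) in an essential way (otherwise $\zeta_s$ need not even be finite); in the source this is mediated by bounding $\zeta_s$ against minimal $L_s$-metrics. Since the paper offers no proof to compare against, there is no discrepancy to report beyond that.
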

\end{theopargself}

See Section~2 of \cite{Neininger2004} for definition and discussion of the
Zolotarev metrics $\zeta_s$. 
For the application in this paper, it suffices to restate the following
properties where we write \smash{$X_n \convZ s X$} to mean
$\zeta_s(X_n,X)
\to 0$:
\vspace{-1ex} 
\begin{enumerate}[itemsep=-.5ex,leftmargin=3em,label=(\Roman*)]
\item \label{prop:zeta-implies-convD}
	\rule{0pt}{12pt}
	\smash{$X_n \rel{\convZ s} X \wrel\implies X_n \rel\convD X$}, see
	\cite[p.\,382]{Neininger2004}.
\item \label{prop:zeta1-implies-conv-mean}
	\rule{0pt}{12pt}
	\smash{$X_n \rel{\convZ1} X \wrel\implies \E[X_n] \rel\to \E[X]$}, see
	\cite[Remark p.\,398]{Neininger2004}.
\item \label{prop:zeta2-implies-conv-var}
	\rule{0pt}{12pt}
	\smash{$X_n \rel{\convZ2} X \wrel\implies \V[X_n] \rel\to \V[X]$}, see
	\cite[Remark p.\,398]{Neininger2004}.
\end{enumerate}
\smallskip

To prove convergence in distribution and in  second moments of the number of
comparisons used by Yaroslavskiy's algorithm under rank smoothing, we apply
Theorem~\ref{thm:neininger-thm-4.1} with $K=3$ and $s=2$. 
For $s=2$, Theorem~\ref{thm:neininger-thm-4.1} requires \emph{centered} random
variables\,---\,so we cannot directly use $C_n^*$. 
Guessing that $\V[C_n] = \Theta(n^2)$, we define
\begin{align*}
	C^{\.\circ}_n &\wwrel\ce \frac{C_n - \E[C_n]}n \;.
\end{align*} 
(Note the difference between $C^*_n$ and $C^{\.\circ}_n$.)
Whence, $\E[C_n]$ is given in Proposition~\ref{Prop:grandave}.
Equation~\eqref{eq:Cn/n-recurrence} 
can be written in terms of $C^{\.\circ}_n$ by subtracting $\E[C_n]$ on both
sides and rewriting the right hand side as follows:
\begin{align*}
	C^{\.\circ}_n &\wwrel\eqlaw 
		  C^{\.\circ}_{I_1}  
		  		\frac{\ui{I_1}n}n \indicator_{\ui{\mathcal E_1}n}
		\bin+ C^{\.\circ\.\prime}_{I_2}  
				\frac{\ui{I_2}n}n \indicator_{\ui{\mathcal E_2}n}
		\bin+ C^{\.\circ\.\prime\prime}_{I_3}  
				\frac{\ui{I_3}n}n \indicator_{\ui{\mathcal	E_3}n}
		\wbin+ \frac1n \Bigl(
						T_n \bin- \E[C_n] 
					\bin+ \sum_{j=1}^3
						\indicator_{\ui{\mathcal E_j}n} 
						\E\bigl[ C_{I_j} \given \ui{I_j}n \bigr]
				\Bigr)\;. 
\end{align*}
Here $C_n^{\.\circ\.\prime}$ and $C_n^{\.\circ\.\prime\prime}$ are independent
copies of $C_n^{\.\circ}$ and $\ui{\mathcal E_j}n$ is the event that
the search continues in subproblem~$j$ (for $j=1,2,3$):
\begin{align*}
	\ui{\mathcal E_1}n \wrel\ce \{R_n<P_n\}			\,,\qquad
	\ui{\mathcal E_2}n \wrel\ce \{P_n < R_n < Q_n\}	\,, \qquad
	\ui{\mathcal E_3}n \wrel\ce \{Q_n < R_n\}		\;.
\end{align*}
In the proof that follows,
we express these events via the distributional
equation~\eqref{eq:distribution-Rn}
\begin{align*}
	\ui{\mathcal E_1}n \wrel\eqlaw \{V < \tfrac{\ui{I_1}n}n\}			\,,\qquad
	\ui{\mathcal E_2}n \wrel\eqlaw \{\tfrac{\ui{I_1}n+1}n < V < \tfrac{I_1+I_2+1}n\}	\,,
	\qquad 
	\ui{\mathcal E_3}n \wrel\eqlaw \{\tfrac{\ui{I_1}n + \ui{I_2}n + 2}n < V\}		\;.
\end{align*}

\smallskip
Now, we show that the three
conditions~\ref{cond:convergence-of-coeffs}, \ref{cond:contraction-of-coeffs} 
and~\ref{cond:non-degeneracy-of-coeffs} of
Theorem~\ref{thm:neininger-thm-4.1} are fulfilled:

\begin{description}[font=\bfseries,itemsep=1ex]

\item[Cond.\,\ref{cond:convergence-of-coeffs}]

We first consider the coefficients $\ui{A_j}n$.
For $j=1,2,3$, we have 
\begin{align*}
		\ui{A_j}n \wrel= \tfrac{\ui{I_j}n}n \indicator_{\ui{\mathcal E_j}n} 
	&\wwrel{\convL2} 
		S_j \indicator_{\mathcal E_j} \,,
\end{align*} 

where the limiting events $\mathcal E_j$ are defined as
\begin{align*}
	\mathcal E_1 \wrel\ce \{V<S_1\}			\,,\qquad
	\mathcal E_2 \wrel\ce \{S_1 < V < S_1+S_2\}	\,, \qquad
	\mathcal E_3 \wrel\ce \{S_1+S_2 < V\}		\;.
\end{align*}
It is essential that $\ui{\mathcal E_j}n$ and $\mathcal E_j$ are defined
in terms of the \emph{same} random variable $V$; this couples the events and allows
us to show the above convergence. 
The proof is a standard, but somewhat tedious computation:

For $j=1$, we condition on $\vect I$ and $\vect S$ to expand the indicator
variables:
\begin{align*}
	\E\Bigl[\Bigl(
		  \tfrac{\ui{I_1}n}n \indicator_{\mathcal E_1^{(n)}} 
		- S_1 \indicator_{\mathcal E_1}
		\Bigr)^2\Bigr]
	&\wwrel\le
		\E\biggl[
			\Prob\Bigl(V <\min\bigl\{\tfrac{\ui{I_1}n}n,S_1\bigr\} \Bigr) \cdot
			\Bigl(\tfrac{\ui{I_1}n} n - S_1\Bigr)^2 
	\\*	&\qquad     {} 
			+ \Prob\Bigl(
				\min\bigl\{\tfrac{I_1^{(n)}}n,S_1\bigr\} 
				< V <
				\max\bigl\{\tfrac{I_1^{(n)}}n,S_1\bigr\}
			\Bigr) \cdot 
			\max\nolimits^2\bigl\{\tfrac{\ui{I_1}n} n,S_1 \bigr\} 
		\biggr]
	\\&\wwrel\le
		\E\Bigl[
			\Bigl(\tfrac{\ui{I_1}n}n - S_1\Bigr)^2 
	                +\Bigl|\tfrac{\ui{I_1}n}n - S_1\Bigr|
	    \,\Bigr]
\\	&\wwrel=
		\Bigl\| \tfrac{I_1^{(n)}}n - S_1 \Bigr\|_2
		\wbin + 
		\Bigl\| \tfrac{I_1^{(n)}}n - S_1 \Bigr\|_1
\\&\wwrel\to 
		0\,, \qquad\text{for } n \to \infty\,,
\end{align*}
where the second inequality uses that the factors are bounded by $1$ uniformly
in $n$ and the last step follows by Lemma~\ref{lem:asymptotic-I}.

The convergence of the terms for $j=3$ are very much the same by symmetry. 
In the case for $j=2$, we have some more cases to distinguish, as the
corresponding events $\ui{\mathcal E_2}n$ and $\mathcal E_2$ contain upper
\emph{and} lower bounds. 
We skip details similar in nature to the case $j=1$.

\smallskip
The second part of condition~\ref{cond:convergence-of-coeffs} concerns the
convergence of the toll function. We show:
\begin{align}
		\frac1n \Bigl(
						T_n \bin- \E[C_n] 
					\bin+ \sum_{j=1}^3
						\indicator_{\ui{\mathcal E_j}n} 
						\E\bigl[ C_{\ui{I_j}n} \given \ui{I_j}n \bigr]
				\Bigr)
	&\wwrel{\convL2}
		T^* \bin+ \tfrac{19}6 \Bigl(-1 + 
			\sum_{j=1}^3 S_j \indicator_{\mathcal E_j}
		\Bigr) \;.
\end{align}
As \smash{$X_n \convL p X$ and $Y_n \convL p Y$ implies $X_n + Y_n \convL p
X+Y$}, we can show convergence of each of the summand individually.
We established \smash{$T_n / n \to T^*$} in Corollary~\ref{Cor:Tstar} and by
Proposition~\ref{Prop:grandave}, we have $\E[C_n] / n \to 19/6$. 
For the remaining sum, consider the first summand as an example;
the others are similar.
Using Proposition~\ref{Prop:grandave} once more and the independence of $V$ and
$(\vect S, \vect I)$, we find
\begin{align*}
		\E \biggl[ \Bigl(
			\frac1n \indicator_{\ui{\mathcal E_1}n} \E\bigl[  
						C_{I_1} \given \ui{I_1}n \bigr]
			\wbin- \tfrac{19}6 S_1 \indicator_{\mathcal E_1}
		\Bigr)^2 \biggr]
	&\wwrel \le 
		\E \biggl[ \Bigl(
			\indicator_{\ui{\mathcal E_1}n}
				\frac{\frac{19}6 \ui{I_1}n + o(n)}n
			\wbin- \tfrac{19}6 S_1 \indicator_{\mathcal E_1}
		\Bigr)^2 \biggr]
\\	&\wwrel \le
		\bigl( \tfrac{19}6 \bigr)^2\:
		\Bigl\| 
			\tfrac{\ui{I_1}n}n \indicator_{\ui{\mathcal E_1}n} 
			- S_1 \indicator_{\mathcal E_1}
		\Bigr\|_2
		\wbin+ o(1)
\\&\wwrel\to 
		0\,, \qquad\text{for } n \to \infty\,,
\end{align*}
where the last step follows from the $L_2$-convergence shown above for the
coefficients.

\item[Cond.\,\ref{cond:contraction-of-coeffs}]
We have to show that
\vspace{-2ex}
\begin{align*}
	\E\Bigl[ \sum_{j=1}^3 
		|S_j \indicator_{\mathcal E_j}|^2 
	\Bigr] 
	&\wwrel < 1 \;.
\end{align*}
By linearity of the expectation, it suffices to consider the summands in
isolation. We compute
\begin{align*}
		\E\bigl[ \bigl(S_1 \indicator_{\mathcal E_1}\bigr)^2 \bigr]
	&\wwrel=
		\int_{s_1=0}^1 \int_{s_2=0}^{1-s_1} \int_{v=0}^1 
			s_1^2 \indicator_{\{v<s_1\}} \cdot 2  
		\;  d v \: d s_2 \:  d s_1 
\\	&\wwrel=
		\int_{s_1=0}^1 (1-s_1) \int_{v=0}^{s_1} 
			2 s_1^2   
		\;  d v \:  d s_1
\wwrel=
		2 \int_{x=0}^1 (1-x) x^3 \; d x
	\wwrel= \tfrac1{10} \;.
\end{align*}
The other summands are symmetric, so we find 
$\E\bigl[ \sum_{j=1}^3 
		|S_j \indicator_{\mathcal E_j}|^2 
	\bigr] = \tfrac3{10} < 1$.

\item[Cond.\,\ref{cond:non-degeneracy-of-coeffs}]

The third condition of Theorem~\ref{thm:neininger-thm-4.1} requires for $j =
1,2,3$ and any constant~$c \in\N$ that 
\begin{align*}
	\E\biggl[
		\indicator_{\{\ui{I_j}n \le \, c \wrel\vee \ui{I_j}n = \,n\}}
		\Bigl| \tfrac{\ui{I_j}n} n \indicator_{\ui{\mathcal E_j}n} \Bigr |^2
	\biggr] \wwrel\to 0
	\,, \qquad \text{as } n \to \infty \;. 
\end{align*}
As we have $\ui{I_j}n \le n-2$ by definition and 
\smash{$\bigl| \tfrac{\ui{I_j}n}n \indicator_{\ui{\mathcal E_j}n} \bigr|^2$}
is bounded from above by $1$ uniformly in $n$, it suffices to show
\begin{align*}
	\Prob\bigl( \ui{I_j}n \le c \bigr) \wwrel\to 0 \;. 
\end{align*}
But this directly follows from the weak law of large numbers as the expected
values grow linearly with~$n$, that is, 
$\E[\ui{I_j}n \given \vect S] = n S_j$.

\end{description}

\medskip\noindent
All conditions of Theorem~\ref{thm:neininger-thm-4.1} are fulfilled, so we
obtain $C^{\.\circ}_n \convZ2 C^{\.\circ}$ and the distribution of $C^{\.\circ}$
is the unique fixed point of the distributional equation
\begin{align}
\label{eq:C5star-limit-dist-equation}
	C^{\.\circ} &\wwrel\eqlaw
	       S_1 \indicator_{\mathcal E_1} C^{{\.\circ}}
	 \bin+ S_2 \indicator_{\mathcal E_2} C^{{\.\circ\.\prime}}
	 \bin+ S_3 \indicator_{\mathcal E_3} C^{{\.\circ\.\prime\prime}}
	\wwbin+ T^* - \tfrac{19}6
	\bin+ \tfrac{19}6 \sum_{j=1}^3 S_j\indicator_{\mathcal E_j}
\end{align}
among all centered distributions with finite second moments;
$C^{{\.\circ}\.\prime}$ and $C^{\.\circ\.\prime\prime}$ are independent
copies of~$C^{\.\circ}$.

\smallskip
It remains to transfer convergence of $C_n^{\.\circ}$ to $C_n^*$.
By Proposition~\ref{Prop:grandave}, we have
\begin{align*}
	C_n^* &\wwrel= C_n^{\.\circ} + \tfrac{19}6 + o(1) \;.
\end{align*} 
So, the asymptotic difference between the two is a (deterministic) constant and
$C_n^*$ converges to the limit $C^* = C^{\.\circ} + 19/6$. 
Inserting this into the limit equation~\eqref{eq:C5star-limit-dist-equation}
for~$C^{\.\circ}$, we obtain~\eqref{eq:Cstar-direct-limit-equation} and the
first part of Theorem~\ref{Theo:randomrank} is proved.
The second part of Theorem~\ref{Theo:randomrank} directly follows from
Lemma~\ref{Lm:equivalantdistribution}, which we prove in 
Appendix~\ref{app:proof-lemma-equivalantdistribution}.

\clearpage
\section{Proof of Convergence: Extreme Ranks}
\label{app:proof-convergence-extremal}

For extreme ranks, we can apply Theorem~5.1 of~\cite{Neininger2004}, restated
here for convenience:

\begin{theopargself}
\begin{theorem}[{\cite[Theorem~5.1]{Neininger2004}}]
	\label{thm:neininger-thm-5.1}
	Let $(Y_n)_{n\in\N}$ be a sequence of $s$-integrable, $0<s\le 3$, random
	variables satisfying the recurrence
	\begin{align*}
		Y_n &\wwrel\eqlaw \sum_{r=1}^K 
			\ui{Y_{\!\rule{0pt}{1.75ex}\ui{I_r}n}}r
		\wrel+ b_n \,, \qquad n\ge n_0\,,
	\end{align*}
	where $K\in\N$ is a constant, $(b_n, \ui In)$
	and $(\ui{Y_n}1)_{n\in\N}$, \dots, $(\ui{Y_n}K)_{n\in\N}$ are independent and
	$\ui{Y_i}r$ is distributed like $Y_i$ for all $r=1,\ldots,K$ and $i\ge0$.
	Additionally, we require $\Prob[\ui{I_r}n = n] \to 0$ as $n\to\infty$ and
	$\V[Y_n] > 0$ for $n\ge n_0$. 
	Assume further that there are functions $f,g:\N_0 \to \R_{\ge0}$ such that 
	(a)~if $0<s\le1$ we have $g(n)>0$ for large $n$;
	(b)~if $1<s\le2$ additionally $\E[Y_n] = f(n) +
	o\bigl(\sqrt{g(n)}\bigr)$ holds and 
	(c)~if $2<s\le 3$ additionally $\V[Y_n]=g(n) + o\bigl(g(n)\bigr)$ is satisfied.
	Moreover, assume the following conditions:
	\vspace{.5ex}
	\begin{enumerate}[itemsep=0ex,leftmargin=3em,label=(\Alph*)]
		\item \rule{0pt}{10pt} \label{cond:stabilization-condition-g}
			For all $r=1,\ldots,K$ holds
			\smash{$ \sqrt{g(\ui{I_r}n) / g(n)}
				\wrel{\convL s} 
				A_r^*$};
		\item \rule{0pt}{10pt} \label{cond:stabilization-condition-f}
			$\frac1{\sqrt{g(n)}} \bigl( 
					b_n - f(n) + \sum_{r=1}^K f(\ui{I_r}n)
				\bigr) \wrel{\convL s} b^*$;
		\item \rule{0pt}{10pt} \label{cond:contraction-of-coeffs-5.1}
			\smash{$\E \bigl[
					\sum_{r=1}^K (A_r^*)^s
				\bigr] < 1$}\,.
	\end{enumerate}
	\vspace{1ex}
	Then for $(X_n)$ defined by 
	$ X_n \ce \bigl(Y_n - f(n)\bigr) / \!\sqrt{g(n)}$,
	we have  
	\smash{$\lim\limits_{n\to\infty} \zeta_s(X_n,X) = 0$} 
	for a random variable $X$ whose distribution is given as the unique fixed
	point of
	\begin{align*}
		X &\wwrel\eqlaw \sum_{r=1}^K A_r^* \ui Xr \rel+ b^* 
	\end{align*}
	among all distributions with 
	(a)~finite $s$th moments, if $0<s\le1$,
	(b)~finite $s$th moments and mean zero, if $1<s\le2$, and 
	(c)~finite $s$th moments, variance one and mean zero, if $2<s\le 3$.
	Here, $(A_1^*,\ldots,A_K^*,b^*)$ and $\ui X1$, \dots, $\ui XK$ are independent
	and $\ui Xr$ is distributed like $X$ for $r=1,\ldots,K$.
	\qed
\end{theorem}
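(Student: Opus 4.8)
\medskip
\noindent\textbf{Proof plan.}
The plan is to run a contraction-method argument in the Zolotarev space $\zeta_s$, exploiting only the structural fact that $\zeta_s$ is an \emph{ideal} metric of order $s$: $\zeta_s(cX,cY)=|c|^s\zeta_s(X,Y)$, and $\zeta_s(X+Z,Y+Z)\le\zeta_s(X,Y)$ whenever $Z$ is independent of the pair $X,Y$. First I would set up the limiting map: on the family $\mathcal D_s$ of distributions with finite $s$th moment\,---\,additionally requiring mean zero if $1<s\le2$, and mean zero and variance one if $2<s\le3$\,---\,define $T\mu\ce\mathcal L\bigl(\sum_{r=1}^{K}A_r^*\,\ui Xr+b^*\bigr)$ with $\ui X1,\dots,\ui XK$ i.i.d.\ of law $\mu$ and independent of $(A_1^*,\dots,A_K^*,b^*)$. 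Using the standing hypotheses (a)--(c) one verifies that $T$ maps $\mathcal D_s$ into itself and that $(\mathcal D_s,\zeta_s)$ is complete, and the ideal property gives
\[
	\zeta_s(T\mu,T\nu)\;\le\;\E\Bigl[\sum_{r=1}^{K}|A_r^*|^s\Bigr]\,\zeta_s(\mu,\nu).
\]
Condition~\ref{cond:contraction-of-coeffs-5.1} makes the prefactor $<1$, so Banach's fixed-point theorem produces the unique fixed point $X$ named in the statement.

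Next I would recast the recurrence in normalised form. Setting $\ui{A_r}n\ce\sqrt{g(\ui{I_r}n)/g(n)}$ and $\ui bn\ce\bigl(b_n-f(n)+\sum_{r=1}^{K}f(\ui{I_r}n)\bigr)/\sqrt{g(n)}$, the variables $X_n=(Y_n-f(n))/\sqrt{g(n)}$ satisfy $X_n\eqlaw\sum_{r=1}^{K}\ui{A_r}n X^{(r)}_{\ui{I_r}n}+\ui bn$, and conditions \ref{cond:stabilization-condition-g} and \ref{cond:stabilization-condition-f} are precisely the $L_s$-convergence $(\ui{\vect A}n,\ui bn)\convL s(\vect A^*,b^*)$ of the coefficients. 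I would then introduce the accompanying sequence $Q_n\eqlaw\sum_{r=1}^{K}\ui{A_r}n X^{(r)}+\ui bn$, where $X^{(r)}$ are i.i.d.\ copies of the fixed point $X$ coupled to the \emph{same} realisation of $(\ui{\vect A}n,\ui bn,\ui{\vect I}n)$, and split $\zeta_s(X_n,X)\le\zeta_s(X_n,Q_n)+\zeta_s(Q_n,X)$. The term $\zeta_s(Q_n,X)$ tends to $0$ because $X\eqlaw\sum_{r}A_r^* X^{(r)}+b^*$ and a standard continuity lemma for $\zeta_s$ transfers $L_s$-convergence of the mixing coefficients to $\zeta_s$-convergence of the corresponding linear images of the fixed $X$.

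For $\zeta_s(X_n,Q_n)$, conditioning on $(\ui{\vect A}n,\ui bn,\ui{\vect I}n)$ and invoking the ideal property once more yields
\[
	\zeta_s(X_n,Q_n)\;\le\;\E\Bigl[\sum_{r=1}^{K}|\ui{A_r}n|^s\,\zeta_s\bigl(X_{\ui{I_r}n},X\bigr)\Bigr].
\]
Writing $\Delta_n\ce\zeta_s(X_n,X)$, I would split this expectation at a threshold $\ell$: on $\{\ui{I_r}n>\ell\}$ the factor $\zeta_s(X_{\ui{I_r}n},X)$ is at most $\sup_{m>\ell}\Delta_m$; the contribution of $\{\ui{I_r}n\le\ell\}$ vanishes as $n\to\infty$ since $\Prob(\ui{I_r}n\le\ell)\to0$ (subproblem sizes grow with $n$) while $|\ui{A_r}n|^s$ stays bounded; and $\{\ui{I_r}n=n\}$ is discarded via the hypothesis $\Prob[\ui{I_r}n=n]\to0$. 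Combining with the previous paragraph and taking $\limsup_n$ followed by $\ell\to\infty$ gives $\limsup_n\Delta_n\le\E[\sum_{r}|A_r^*|^s]\,\limsup_n\Delta_n$; since the factor is $<1$, this forces $\Delta_n\to0$ and identifies the limit with the fixed point built above\,---\,\emph{provided} one already knows $\sup_n\Delta_n<\infty$.

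Establishing that a priori bound $\sup_n\Delta_n<\infty$ is the step I expect to be the main obstacle, because finiteness of $\zeta_s$ is delicate. I would obtain it from uniform bounds on the $s$th absolute moments of the $X_n$ (which follow from the moment hypotheses together with the prescribed asymptotics of $\E[Y_n]$ and $\V[Y_n]$) combined with the elementary fact that $\zeta_s(\mu,\nu)$ is finite and uniformly bounded once $\mu,\nu$ have bounded $s$th moments and \emph{matching} moments of every order $<s$. For $1<s\le2$ the matching of means is only asymptotic\,---\,the $o(\sqrt{g(n)})$ error in $f(n)$ leaves $X_n$ with mean $o(1)$ rather than exactly $0$, so naively $\zeta_s(X_n,X)=\infty$\,---\,and the remedy is to run the whole argument above for the exactly centred variables $X_n-\E[X_n]$ and afterwards absorb the vanishing deterministic shift; the analogous renormalisation of the variance handles $2<s\le3$. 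With this bookkeeping in place, the contraction estimate closes the proof.
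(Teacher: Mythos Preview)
The paper does not prove this theorem at all: it is quoted verbatim from Neininger and R\"uschendorf~\cite{Neininger2004} and closed with a \textsf{qed} symbol, then applied as a black box in Appendix~\ref{app:proof-convergence-extremal}. So there is no ``paper's own proof'' to compare your proposal against.

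That said, your sketch is essentially the standard contraction-method argument that Neininger and R\"uschendorf use in the cited source, and the overall architecture (define the limiting map on $(\mathcal D_s,\zeta_s)$, show it is a strict contraction via the ideal property, produce the fixed point, normalise the recurrence, couple via an accompanying sequence $Q_n$, and close with a $\limsup$ inequality) is correct. One small inaccuracy: you write that the contribution of $\{\ui{I_r}n\le\ell\}$ vanishes because ``$\Prob(\ui{I_r}n\le\ell)\to0$ while $|\ui{A_r}n|^s$ stays bounded''. The theorem does \emph{not} assume subproblem sizes grow; what makes that piece negligible is rather that $\ui{A_r}n=\sqrt{g(\ui{I_r}n)/g(n)}$ itself tends to $0$ on the event $\{\ui{I_r}n\le\ell\}$ whenever $g(n)\to\infty$, so $\E\bigl[\indicator_{\{\ui{I_r}n\le\ell\}}|\ui{A_r}n|^s\bigr]\to0$ regardless of whether the probability vanishes. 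With that correction your outline matches the original proof.
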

\end{theopargself}

This theorem is a special case of theorem we used in
Appendix~\ref{app:proof-convergence-grand-avg} and applies to distributional
recurrences where the (non-normalized) costs $Y_n$ of subproblems directly
contribute, without a factor in front of them.
Our equation~\eqref{Eq:Cnhat} is of this form.

We apply Theorem~\ref{thm:neininger-thm-5.1} 
with $K=1$, $s=2$ and
functions $f(n) = \tfrac{19}8 n$ and $g(n) = n^2$.
By Proposition~\ref{Prop:extremalave} we have 
\smash{$\E[\hat C_n] = f(n) + o\bigl(\sqrt{g(n)}\bigr)$} as needed. 
Theorem~\ref{thm:neininger-thm-5.1} then states convergence of the
centered variables 
\smash{$\hat C_n^{\.\circ} = \frac1n(\hat C_n - \tfrac{19}8n)$},
given conditions~\ref{cond:stabilization-condition-g},
\ref{cond:stabilization-condition-f} and~\ref{cond:contraction-of-coeffs-5.1}
are fulfilled.
We check the conditions:

\begin{description}[font=\bfseries,topsep=1ex]
\item[Cond.\,\ref{cond:stabilization-condition-g}]
We have by Lemma~\ref{lem:asymptotic-I}:
\vspace{-2ex}
\begin{align*}
	\sqrt{ \frac{g(I_1^{(n)})}{g(n)} }
	&\wwrel= \frac{I_1^{(n)}}n 
	\wwrel{\convL2} S_1 \;.
\end{align*}

\item[Cond.\,\ref{cond:stabilization-condition-f}]
Using Proposition~\ref{Prop:extremalave} and Corollary~\ref{Cor:Tstar},
we compute:
\vspace{-1.5ex}
\begin{align*}
		g(n)^{-1/2}\bigl( T_n - f(n) + f(\ui{I_1}n) \bigr)
	&\wwrel=
		T_n^* - \tfrac{19}8 + \tfrac{19}8 \tfrac{\ui{I_1}n}n
	\wwrel{\convL2} T^* + \tfrac{19}8(S_1-1) \;. 
\end{align*}

\item[Cond.\,\ref{cond:contraction-of-coeffs-5.1}]
A simple computation shows
$\E\bigl[ S_1^2 \bigr] = \tfrac16 < 1$.

\end{description}

\smallskip\noindent
All requirements of Theorem~\ref{thm:neininger-thm-5.1} are fulfilled and we
conclude that
\begin{align}
\label{eq:Chat5star-definition}
		\hat C_n^{\.\circ}
		\wrel= \frac{\hat C_n - \tfrac{19}8 n}n
		\wrel= \hat C_n^* - \tfrac{19}8
	&\wwrel{\convZ2}
		\hat C^{\.\circ} \,, 
\end{align}
where the distribution of $\hat C^{\.\circ}$ is obtained as unique fixed point
of
\begin{align}
\label{eq:Chat5star-limit-dist-equation}
		\hat C^{\.\circ}
	&\wwrel\eqlaw
		S_1 C^{\.\circ} \wbin+ T^* + \tfrac{19}8 (S_1-1) \,,
\end{align}
among all centered distributions with finite second moments.
The constant difference between $\hat C_n^*$ and~$\hat C_n^{\.\circ}$
implies that $\hat C_n^*$ converges to $C^{\.\circ} + \tfrac{19}8$.
Inserting $C^* = C^{\.\circ} + \tfrac{19}8$
in~\eqref{eq:Chat5star-limit-dist-equation} we obtain
equation~\eqref{Eq:Chat}; Theorem~\ref{Theo:extremal} is proved.

\clearpage
\section{Proof of Lemma~\protect\ref{Lm:equivalantdistribution}}
\label{app:proof-lemma-equivalantdistribution}

To prove that the two distributional equations are equivalent, 
we show that the \weakemph{characteristic function} $\phi_{C^*}(t)$ of variable
$C^*$ fulfilling equation~\eqref{eq:Cstar-direct-limit-equation-copy} is also
the characteristic function of a solution
of~\eqref{Eq:Cstar}.
From~\eqref{eq:Cstar-direct-limit-equation-copy}, we find:
\begin{align*}
\phi_{C^*}(t) 
&\wwrel= 
\E\biggl[\exp\Bigl(it\Bigl( 
U_{(1)} \indicator_{\{V < {U_{(1)}}\}} \, C^*
\wbin+ 	\bigl(U_{(2)} - U_{(1)}\bigr) \indicator_ {\{U_{(1)} < V < U_{(2)}\}}
\\*&\wwrel\ppe\qquad {}
\bin+	\bigl(1-U_{(2)}\bigr) \indicator_ {\{V > U_{(2)}\}}  \, \Cspp
\wbin+ T^*\Bigr)\Bigr)\biggr] \\
&\wwrel= {} \mathbin{\phantom{+}}
\int_{y=0}^1 \int_{x=0}^y \int _{v=0} ^x 
\E\bigl[e^{it(xC^*+T^*)}\bigr] f_{(U_{(1)}, U_{(2)}, V)} (x,y,v) 
\, dv \, dx \, dy
\\*&\wwrel\ppe {} 
+  \int_{y=0}^1 \int_{x=0}^y \int _{v=x} ^y 
\E\bigl[e^{it((y-x)C^* +T^*)}\bigr] f_{(U_{(1)}, U_{(2)}, V)} (x,y,v) 
\, dv \, dx \, dy
\\*&\wwrel\ppe {}
+ \int_{y=0}^1 \int_{x= 0}^y \int _{v=y} ^1 
\E\bigl[e^{it((1-y)C^* +T^*)}\bigr] f_{(U_{(1)}, U_{(2)}, V)} (x,y,v) 
\, dv \, dx \, dy  \\[0.5ex]
&\wwrel= {} \mathbin{\phantom{+}}
2 \int_{y=0}^1 \int_{x= 0}^y  
x\, \E\bigl[e^{it(xC^* + 1 +y(2-x-y))}\bigr] 
\, dx \, dy
\\*&\wwrel\ppe {} 
+ 2\int_{y=0}^1 \int_{x= 0}^y 
(y-x) \, \E\bigl[e^{it((y-x)C^* + 1 +y(2-x-y))}\bigr] 
\, dx \, dy 
\\*&\wwrel\ppe {} 
+ 2 \int_{y=0}^1 \int_{x= 0}^y
(1-y)\, \E\bigl[e^{it((1-y)C^*+1 + y(2-x-y))}\bigr] 
\, dx \, dy \;.                  
\end{align*}
In the middle integral, make the change of variables
$$ x = 1-u, \qquad y = 1-u+v,$$
and in the rightmost integral make the change of variables
$$ x = 1-u, \qquad y = 1-v,$$
to get the characteristic function in the form
\begin{align*}
\phi_{C^*}(t) 
&\wwrel= {} \mathbin{\phantom{+}} 
2 \int_{u=0}^1 \int_{v=0}^u
\phi_{C^*}(tv) \, v e^{it(1+u(2-v-u))}
\, dv \, du 
\\*&\wwrel\ppe {} 
+ 2 \int_{u=0}^1 \int_{v=0}^u 
\phi_{C^*}(tv) \, v e^{it(1+(1-u+v) (2u-v))} 
\, dv \, du
\\*&\wwrel\ppe {}
+ 2 \int_{u=0}^1 \int_{v=0}^u 
\phi_{C^*}(tv) \, v e^{it(1+(1-v)(u+v))} 
\, dv \, du \\
&\wwrel=
\int_{w=0}^1  \int_{x=0}^w \Bigl(  
\tfrac 1 3 e^{it(1+w(2-x-w))} 
+ \tfrac 1 3 e^{it(1+(1+x-w)(2w-x))} 
+ \tfrac 1 3 e^{it(1+(1-w)(x+w))} 
\Bigr)
\\*&\wwrel\pe \qquad\qquad\qquad {} \times  (6x) \phi_{C^*} (tx)   \,
dx\, dw,
\end{align*}
which is also the characteristic function of $X^* C^* + g(X^*, W^*)$.
\qed

\end{document}